\newcommand{\eM}     {$\epsilon$\protect\nobreakdash-machine}
\newcommand{\eMs}    {$\epsilon$\protect\nobreakdash-machines}
\newcommand{\EMs}    {$\epsilon$\protect\nobreakdash-Machines}
\newcommand{\Process}{\mathcal{P}}
\newcommand{\MeasAlphabet}	{\mathcal{A}}
\newcommand{\MeasSymbol}   { {X} }
\newcommand{\meassymbol}   { {x} }
\newcommand{\past}	{ {\overleftarrow {\meassymbol}} }
\newcommand{\Future}	{ \overrightarrow{\MeasSymbol} }
\newcommand{\future}	{ \overrightarrow{\meassymbol} }
\newcommand{\CausalState}	{ \mathcal{S} }
\newcommand{\causalstate}	{ \sigma }
\newcommand{\CausalStateSet}	{ \boldsymbol{\CausalState} }
\newcommand{\Prob}      {\Pr} % use standard command
\newcommand{\hmu}		{h_\mu}
\theoremstyle{plain}   \newtheorem{Alg}{Algorithm}
\theoremstyle{plain}   \newtheorem{Lem}{Lemma}
\theoremstyle{plain} 	
\theoremstyle{plain} 	\newtheorem{The}{Theorem}
\theoremstyle{plain} 	\newtheorem{Prop}{Proposition}
\theoremstyle{plain} 	
\theoremstyle{plain}	\newtheorem*{Rem}{Remark}
\theoremstyle{plain}	\newtheorem{Def}{Definition} 
\theoremstyle{plain}	
\theoremstyle{plain} \newtheorem{Fact}{Fact}
\def\norm#1{\|#1\|}
\def\N{\mathbb{N}}
\def\vy{\overrightarrow{y}}
\def\vz{\overrightarrow{z}}
\def\vv{\overrightarrow{v}}
\def\vw{\overrightarrow{w}}
\def\ve{\overrightarrow{e}}
\def\phit{\widetilde{\phi}}
\def\pit{\widetilde{\pi}}
\def\Mt{\widetilde{M}}
\def\Nt{\widetilde{N}}
\def\Tt{\widetilde{T}}
\def\AvgUncertainty{\mathcal{U}}
\def\goesonx{\stackrel{x}{\rightarrow}} % read 'goes on a'
\def\goesonw{\stackrel{w}{\rightarrow}} % read 'goes on w'
\newcommand{\SYN}{ \mathrm{SYN} }
\newcommand{\WSYN}{ \mathrm{WSYN} }
\newcommand{\NSYN}{ \mathrm{NSYN} }
\def\L{\mathcal{L}}
\def\LM{\mathcal{L}(M)}
\def\LLM{\mathcal{L}_L(M)}
\def\LiM{\mathcal{L}_{\infty}(M)}
\def\sumw{\sum_{w \in \LLM}}
\begin{document}

\title{Exact Synchronization for Finite-State Sources}

\author{Nicholas F. Travers}
\email{ntravers@math.ucdavis.edu}
\affiliation{Complexity Sciences Center}
\affiliation{Mathematics Department}

\author{James P. Crutchfield}
\email{chaos@ucdavis.edu}
\affiliation{Complexity Sciences Center}
\affiliation{Mathematics Department}
\affiliation{Physics Department\\
University of California at Davis,\\
One Shields Avenue, Davis, CA 95616}
\affiliation{Santa Fe Institute\\
1399 Hyde Park Road, Santa Fe, NM 87501}

\date{\today}

\bibliographystyle{unsrt}

% ************************* ABSTRACT *************************
\begin{abstract}
We analyze how an observer synchronizes to the internal state of a finite-state
information source, using the \eM\ causal representation. Here, we treat the
case of exact synchronization, when it is possible for the observer to synchronize
completely after a finite number of observations. The more difficult case of strictly 
asymptotic synchronization is treated in a sequel. In both cases, we find that
an observer, on average, will synchronize to the source state exponentially fast and that, as a result, 
the average accuracy in an observer's predictions of the source output approaches its optimal 
level exponentially fast as well. Additionally, we show here how to analytically calculate the
synchronization rate for exact \eMs\ and provide an efficient polynomial-time
algorithm to test \eMs\ for exactness.
\end{abstract}

\pacs{
02.50.-r  %  Probability theory, stochastic processes, and statistics
89.70.+c  %  Information science 
05.45.Tp  %  Time series analysis
02.50.Ey  %  Stochastic processes 
% 02.50.Ga  %  Markov processes 
% 05.20.-y  %  Classical statistical mechanics
% 05.45.-a  %  Nonlinear dynamics and nonlinear dynamical systems
% 89.75.Kd  %  Complex Systems: Patterns 
}
\preprint{Santa Fe Institute Working Paper 10-11-031}
\preprint{arxiv.org:1008.4182 [nlin.CD]}

\maketitle

% ****************************************************************
%\tableofcontents
%  ************************* INTRODUCTION *************************

\section{Introduction}
\vspace{-0.1in}

Synchronization and state estimation for finite-state sources is a central
interest in several disciplines, including information theory, theoretical
computer science, and dynamical systems \cite{Forn05,Vite67,Jono96,Sand05a,Paz71a}.
Here, we study the synchronization problem for a class of finite-state hidden
Markov models known as \eMs\ \cite{CMechMerged}. These machines have the
important property of \emph{unifilarity}, meaning that the next state is
completely determined by the current state and the next output symbol generated.
Thus, if an observer is ever able to synchronize to the machine's internal
state, it remains synchronized forever using continued observations of the
output. As we will see, our synchronization results also have important
consequences for prediction. The future output of an \eM\ is a function of the
current state, so better knowledge of its state enables an observer to make
better predictions of the output. 

%***************** BACKGROUND ****************** 

\vspace{-0.1in}
\section{Background}
\label{sec:Background}
\vspace{-0.1in}

This section provides the necessary background for our results, including
information-theoretic measures of prediction for stationary information
sources and formal definitions of \eMs\ and synchronization. In particular, we
identify two qualitatively distinct types of synchronization: exact
(synchronization via finite observation sequences) and asymptotic
(requiring infinite sequences). The exact case is the subject here; 
the nonexact case is treated in a sequel \cite{Trav10b}.

\vspace{-0.1in}
\subsection{Stationary Information Sources}
\vspace{-0.1in}

Let $\MeasAlphabet$ be a finite alphabet, and let
$\MeasSymbol_0, \MeasSymbol_1, \ldots $ be the random
variables (RVs) for a sequence of observed symbols $x_t \in \MeasAlphabet$
generated by an information source. We denote the RVs for the sequence of
future symbols beginning at time $t = 0$ as
$\Future = \MeasSymbol_0 \MeasSymbol_1 \MeasSymbol_2 ...$, the \emph{block} of
$L$ symbols beginning at time $t = 0$  as
$\Future^L = \MeasSymbol_0 \MeasSymbol_1 ... \MeasSymbol_{L-1}$, and the block
of $L$ symbols beginning at a given time $t$ as
$\Future_t^L = \MeasSymbol_t \MeasSymbol_{t+1} ... \MeasSymbol_{t+L-1}$.
A \emph{stationary source} is one for which
$\Prob(\Future_t^L) = \Prob(\Future_0^L)$ for all $t$ and all $L > 0$. 

We monitor an observer's predictions of a stationary source using
information-theoretic measures \cite{InfoThMerged}, as reviewed below. 

\begin{Def} 
The \emph{block entropy} $H(L)$ for a stationary source is:
\begin{align*}
%H(L) & \equiv H[\Future^L] \\
%  & = - \sum_{\{\future^L\}} \Prob(\future^L) \log_2 \Prob(\future^L) ~.
H(L) \equiv H[\Future^L] = - \sum_{\{\future^L\}} \Prob(\future^L) \log_2 \Prob(\future^L) ~.
\end{align*}
\end{Def}
\noindent
The block entropy gives the average uncertainty in observing blocks $\Future^L$.

\begin{Def} 
The \emph{entropy rate} $\hmu$ is the asymptotic average entropy per symbol:
\begin{align*}
\hmu & \equiv \lim_{L \to \infty} \frac{H(L)}{L} \\
  & = \lim_{L \to \infty} H[\MeasSymbol_L|\Future^L] ~.
\end{align*}
\end{Def}
\noindent

\begin{Def} 
The entropy rate's \emph{length-$L$ approximation} is:
\begin{align*}
\hmu(L) & \equiv H(L) - H(L-1) \\
  & = H[\MeasSymbol_{L-1}|\Future^{L-1}] ~.
\end{align*}
\end{Def}
\noindent
That is, $\hmu(L)$ is the observer's average uncertainty in the next
symbol to be generated after observing the first $L-1$ symbols.

For any stationary process, $\hmu(L)$ monotonically decreases to the limit
$\hmu$ \cite{InfoThMerged}. However, the form of convergence depends on the 
process. The lower the value of $\hmu$ a source has, the better an observer's
predictions of the source output will be asymptotically. The faster $\hmu(L)$
converges to $\hmu$, the faster the observer's predictions reach this optimal
asymptotic level. If we are interested in making predictions after a 
finite observation sequence, then the source's true entropy rate $\hmu$, as well as
the rate of convergence of $\hmu(L)$ to $\hmu$, are both important properties of 
an information source.

\vspace{-0.2in}
\subsection{Hidden Markov Models}
\vspace{-0.1in}

In what follows we restrict our attention to an important class of stationary
information sources known as hidden Markov models. For simplicity, we assume
the number of states is finite. 

\begin{Def}
\label{Def:HMM}
A finite-state edge-label \emph{hidden Markov machine} (HMM) consists of
\begin{enumerate}
\setlength{\topsep}{0mm}
\setlength{\itemsep}{0mm}
\item a finite set of states
	$\CausalStateSet = \{\causalstate_1, ... , \causalstate_N \}$,
\item a finite alphabet of symbols $\MeasAlphabet$, and
\item a set of $N$ by $N$ symbol-labeled transition matrices
	$T^{(\meassymbol)}$, $\meassymbol \in \MeasAlphabet$,
	where $T^{(\meassymbol)}_{ij}$ is the probability of transitioning from
	state $\causalstate_i$ to state $\causalstate_j$ on symbol $\meassymbol$.
	The corresponding overall state-to-state transition matrix is denoted 
	$T = \sum_{x \in \MeasAlphabet} T^{(x)}$.
\end{enumerate}
A hidden Markov machine can be depicted as a directed graph with labeled edges.
The nodes are the states $\{\causalstate_1, ... , \causalstate_N \}$ and for
all $\meassymbol,i,j$ with $T^{(\meassymbol)}_{ij} > 0$ there is an edge from
state $\causalstate_i$ to state $\causalstate_j$ labeled $p|\meassymbol$ for
the symbol $x$ and transition probability $p = T^{(\meassymbol)}_{ij}$. We require
that the transition matrices $T^{(\meassymbol)}$ be such that this graph is
strongly connected.
\end{Def}

A hidden Markov machine $M$ generates a stationary process $\Process = (\MeasSymbol_L)_{L\geq 0}$ 
as follows. Initially, $M$ starts in some state $\causalstate_{i^*}$ chosen
according to the stationary distribution $\pi$ over machine states---the
distribution satisfying $\pi T = \pi$.
It then picks an outgoing edge according to their relative transition probabilities
$T^{(\meassymbol)}_{i^*j}$, generates the symbol $\meassymbol^*$ labeling this
edge, and follows the edge to a new state $\causalstate_{j^*}$. The next
output symbol and state are consequently chosen in a similar fashion, and this
procedure is repeated indefinitely. 

We denote $\CausalState_0, \CausalState_1, \CausalState_2, \ldots $ as the RVs
for the sequence of machine states visited and
$\MeasSymbol_0, \MeasSymbol_1, \MeasSymbol_2, \ldots$
as the RVs for the associated sequence of output symbols generated. 
The sequence of states $(\CausalState_L)_{L \geq 0}$
is a Markov chain with transition kernel $T$. However, the stochastic process
we consider is not the sequence of states, but rather the associated sequence
of outputs $(\MeasSymbol_L)_{L \geq 0}$, which generally is not Markovian.
We assume the observer directly observes this sequence of outputs, but does not
have direct access to the machine's ``hidden'' internal states.

\vspace{-0.2in}
\subsection{Examples}
\vspace{-0.1in}

In what follows, it will be helpful to refer to several example hidden Markov machines that
illustrate key properties and definitions. We introduce four examples,
all with a binary alphabet $\MeasAlphabet = \{0, 1 \}$.

%Even
\vspace{-0.1in}
\subsubsection{Even Process}
\vspace{-0.1in}

Figure~\ref{fig:EvenProcess} gives a HMM for the Even Process.
Its transitions matrices are:
\begin{align}
T^{(0)} & =
  \left(
  \begin{array}{cc}
  	p & 0 \\
	0 & 0 \\
  \end{array}
  \right) ~,\nonumber \\
T^{(1)} & =
  \left(
  \begin{array}{cc}
	0 & 1-p \\
	1 & 0 \\
  \end{array}
  \right) ~.
\end{align}

\begin{figure}[ht]
\centering
\includegraphics[scale=0.75]{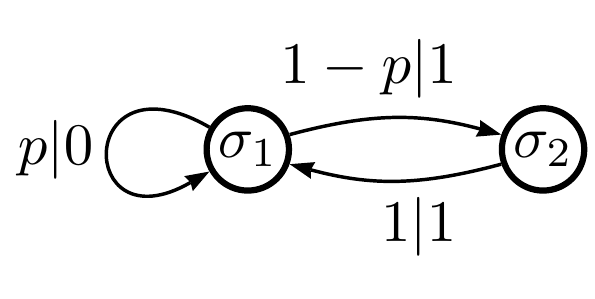}
\caption{A hidden Markov machine (the \eM) for the Even Process. The
  transitions denote the probability $p$ of generating symbol $x$ as $p|x$.
  }
\label{fig:EvenProcess}
\end{figure}

The support for the Even Process consists of all binary sequences in which
blocks of uninterrupted $1$s are even in length, bounded by $0$s. After each
even length is reached, there is a probability $p$ of breaking the block of
$1$s by inserting a $0$. The hidden Markov machine has two internal states,
$\CausalStateSet = \{ \causalstate_1, \causalstate_2 \}$, and a single
parameter $p \in (0,1)$ that controls the transition probabilities.

%ABC
\vspace{-0.1in}
\subsubsection{Alternating Biased Coins Process}
\vspace{-0.1in}

Figure~\ref{fig:ABC} shows a HMM for the Alternating Biased
Coins (ABC) Process. The transitions matrices are:
\begin{align}
T^{(0)} & =
  \left(
  \begin{array}{cc}
  	0 & 1-p \\
	1-q & 0 \\
  \end{array}
  \right) ~, \nonumber \\
T^{(1)} & =
  \left(
  \begin{array}{cc}
  	0 & p \\
	q & 0 \\
  \end{array}
  \right) ~.
\end{align}

\begin{figure}[h]
\centering
\includegraphics[scale=0.75]{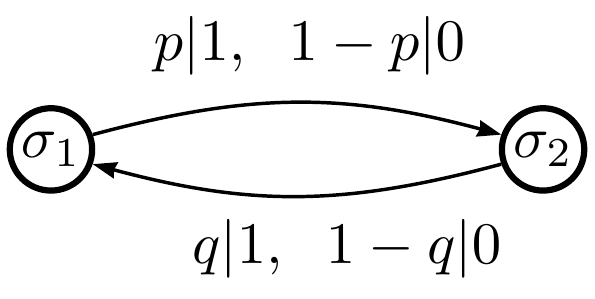}
\caption{A hidden Markov machine (the \eM) for the Alternating Biased
  Coins Process. 
  }
\label{fig:ABC}
\end{figure}

The process generated by this machine can be thought of as 
alternatively flipping two coins of different biases $p \not= q$. 

%SNS
\subsubsection{SNS Process}

Figure~\ref{fig:SNS} depicts a two-state HMM for the SNS Process
which generates long sequences of $1$s broken by isolated $0$s.
Its matrices are:
\begin{align}
T^{(0)} & =
  \left(
  \begin{array}{cc}
  	0 & 0 \\
	1-q & 0 \\
  \end{array} 
  \right) ~,\nonumber\\
T^{(1)} & =
  \left(
  \begin{array}{cc}
  	p & 1-p \\
	0 & q \\
  \end{array}
  \right) ~.
\end{align}

\begin{figure}[h]
\centering
\includegraphics[scale=0.75]{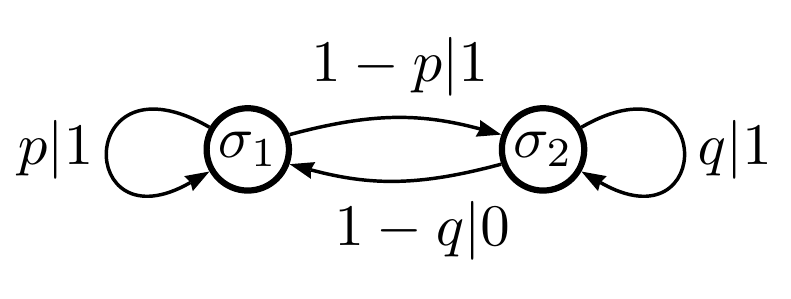}
\caption{An HMM for the SNS Process.}
\label{fig:SNS}
\end{figure} 

Note that the two transitions leaving state $\causalstate_1$ both
emit $x = 1$. 

%NP2
\subsubsection{Noisy Period-$2$ Process}

Finally, Fig.~\ref{fig:NMS} depicts a nonminimal HMM for the Noisy
Period-$2$ (NP2) Process. The transition matrices are:
\begin{align}
T^{(0)} & =
  \left(
  \begin{array}{cccc}
  	0 & 0 & 0 & 0 \\
	0 & 0 & 1-p & 0 \\
	0 & 0 & 0 & 0 \\
	1-p & 0 & 0 & 0 \\
  \end{array}
  \right) ~,\nonumber \\
T^{(1)} & =
  \left(
  \begin{array}{cccc}
  	0 & 1 & 0 & 0 \\
	0 & 0 & p & 0 \\
	0 & 0 & 0 & 1 \\
	p & 0 & 0 & 0 \\
  \end{array}
  \right) ~.
\end{align}

\begin{figure}[h]
\centering
\includegraphics[scale=0.75]{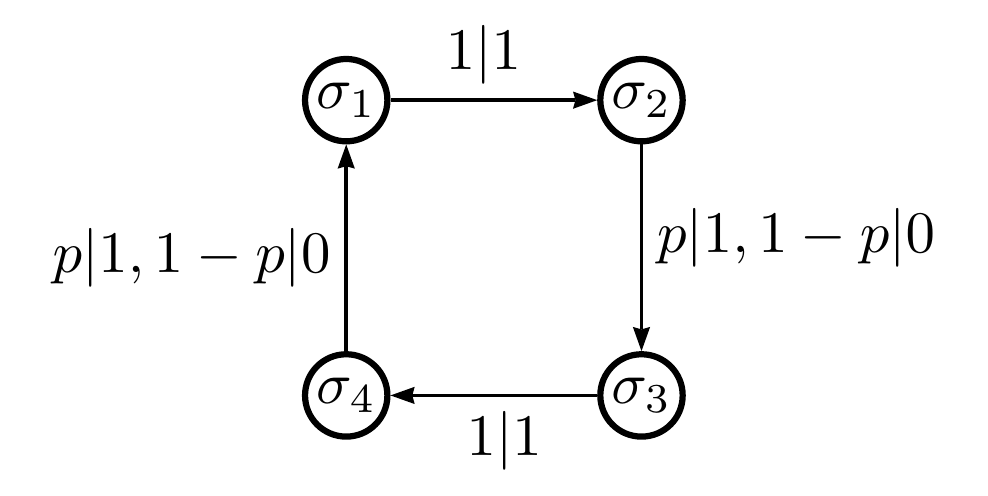}
\caption{An HMM for the Noisy Period-$2$ Process.}
\label{fig:NMS}
\end{figure}
\noindent

It is clear by inspection that the same process can be captured by a hidden
Markov machine with fewer states. Specifically, the distribution over future
sequences from states $\causalstate_1$ and $\causalstate_3$ are the same,
so those two states are redundant and can be merged. The same is also true for
states $\causalstate_2$ and $\causalstate_4$.

\subsection{\EMs}

We now introduce a class of hidden Markov machines that has a number of
desirable properties for analyzing synchronization.

\begin{Def}
A \emph{finite-state \eM} is a finite-state edge-label hidden Markov machine
with the following properties:
\begin{enumerate}
\setlength{\topsep}{0mm}
\setlength{\itemsep}{0mm}
\item \emph{Unifilarity}: For each state $\causalstate_k \in \CausalStateSet$
	and each symbol $\meassymbol \in \MeasAlphabet$ there is at most one
	outgoing edge from state $\causalstate_k$ labeled with symbol $\meassymbol$.
\item \emph{Probabilistically distinct states}: For each pair of distinct states
	$\causalstate_k, \causalstate_j \in \CausalStateSet$ there exists some
	finite word $w = \meassymbol_0 \meassymbol_1 \ldots \meassymbol_{L-1}$ such that: 
	\begin{align*}
	\Prob(\Future^L = w|\CausalState_0 = \causalstate_k)
		\not= \Prob(\Future^L = w|\CausalState_0 = \causalstate_j).
	\end{align*}
\end{enumerate} 
\label{def:eM}
\end{Def}

The hidden Markov machines given above for the Even Process and ABC Process are
both $\epsilon$-machines. The SNS machine of example 3 is not an \eM, however,
since state $\causalstate_1$ is not unifilar. The NP2 machine of example 4 is
also not an \eM, since it does not have probabilistically distinct states,
as noted before. 

\EMs\ were originally defined in Ref. \cite{CMechMerged} as hidden Markov
machines whose states, known as \emph{causal states}, were the equivalence
classes of infinite pasts $\past$ with the same probability distribution over
futures $\future$. This ``history'' \eM\ definition is, in fact, equivalent to
the ``generating'' \eM\ definition presented above in the finite-state case.
Although, this is not immediately apparent. Formally, it follows from the
synchronization results established here and in Ref. \cite{Trav10b}.

It can also be shown that an \eM\ $M$ for a given process $\Process$ is unique
up to isomorphism \cite{CMechMerged}. That is, there cannot be two different
finite-state edge-label hidden Markov machines with unifilar transitions and
probabilistically distinct states that both generate the same process
$\Process$. Furthermore, \eMs\ are minimal unifilar generators in the sense
that any other unifilar machine $M'$ generating the same process $\Process$
as an $\epsilon$-machine $M$ will have more states than $M$. Note that
uniqueness does not hold if we remove either condition 1 or 2 in Def.
\ref{def:eM}.

\subsection{Synchronization} 

Assume now that an observer has a correct model $M$ ($\epsilon$-machine) for a
process $\Process$, but is not able to directly observe $M$'s hidden internal
state. Rather, the observer must infer the internal state by observing the
output data that $M$ generates. 

For a word $w$ of length $L$ generated by $M$ let
$\phi(w) = \Prob(\CausalStateSet|w)$ be the observer's \emph{belief distribution} 
as to the current state of the machine after observing $w$. That is, 
\begin{align*}
\phi(w)_k	& = \Prob(\CausalState_L = \causalstate_k | \Future^L=w) \\
		& \equiv \Prob(\CausalState_L = \causalstate_k | \Future^L=w, \CausalState_0 \sim \pi) ~.
\end{align*}
And, define:
\begin{align*}
u(w) & = H[\phi(w)] \\
         &= H[\CausalState_L|\Future^L = w],
\end{align*} 
as the observer's uncertainty in the machine state after observing $w$. 

Denote $\LM$ as the set of all finite words that $M$ can generate, $\LLM$
as the set of all length-$L$ words it can generate, and $\LiM$ as the set of 
all infinite sequences $\future = \meassymbol_0 \meassymbol_1 ...$ that it can generate.

\begin{Def}
A word $w \in \LM$ is a \emph{synchronizing word} (or \emph{sync word)} for $M$
if $u(w) = 0$; that is, if the observer knows the current state of the machine 
with certainty after observing $w$. 
\end{Def} 

We denote the set of $M$'s infinite synchronizing sequences as $\SYN(M)$ and the set of $M$'s 
infinite weakly synchronizing sequences as $\WSYN(M)$:
\begin{align*}
& \SYN(M) = \{ \future \in \mathcal{L}_{\infty}(M) : u(\future^L) = 0 \mbox{ for some } L\} ,~ and \\
& \WSYN(M) = \{ \future \in \LiM : u(\future^L) \rightarrow 0 \mathrm{~as~} L \rightarrow \infty \} ~.
\end{align*} 

\begin{Def}
\label{Def:ExactSync}
An \eM\ M is \emph{exactly synchronizable} (or simply \emph{exact}) if
$\Prob(\SYN(M)) = 1$; that is, if the observer synchronizes to almost every
(a.e.) sequence generated by the machine in finite time.
\end{Def}

\begin{Def}
\label{Def:AsymptoticSync}
An \eM\ M is \emph{asymptotically synchronizable} if $\Prob(\WSYN(M)) = 1$;
that is, if the observer's uncertainty in the machine state vanishes asymptotically
for a.e. sequence generated by the machine. 
\end{Def}

The Even Process \eM, Fig. \ref{fig:EvenProcess}, is an exact machine.
Any word containing a $0$ is a sync word for this machine, 
and almost every $\future$ it generates contains at least one $0$.
The ABC Process \eM, Fig. \ref{fig:ABC}, is not exactly synchronizable,
but it is asymptotically synchronizable.

\begin{Rem}
If $w \in \LM$ is a sync word, then by unifilarity so is $wv$, for all
$v$ with $wv \in \LM$. Once an observer synchronizes exactly, it remains
synchronized exactly for all future times. It follows that any exactly
synchronizable machine is also asymptotically synchronizable. 
\end{Rem}

\begin{Rem}
If $w \in \LM$ is a sync word then so is $vw$, for all $v$ with 
$vw \in \LM$. Since any finite word $w \in \LM$ will be contained in almost
every infinite sequence $\future$ the machine generates, it follows that
a machine is exactly synchronizable  if (and only if) it has some sync word
$w$ of finite length.
\end{Rem}

\begin{Rem}
It turns out all finite-state \eMs\ are asymptotically synchronizable;
see Ref. \cite{Trav10b}. Hence, there are two disjoint classes to consider:
\emph{exactly synchronizable machines} and \emph{asymptotically synchronizable
machines that are nonexact}. The exact case is the subject of the remainder.
\end{Rem}

Finally, one last important quantity for synchronization is the observer's
average uncertainty in the machine state after seeing a length-$L$ block of
output \cite{InfoThMerged}.

\begin{Def}
The observer's \emph{average state uncertainty} at time $L$ is:
\begin{align}
\AvgUncertainty(L) & \equiv H[\CausalState_L|\Future^L] \nonumber \\
  & = \sum_{\{ \future^L \}} \Prob(\future^L)
  \cdot H[\CausalState_L|\Future^L = \future^L] ~.
\end{align}
\end{Def}
That is, $\AvgUncertainty(L)$ is the expected value of an observer's uncertainty
in the machine state after observing $L$ symbols.

Now, for an \eM, an observer's prediction of the next output symbol is a direct
function of the probability distribution over machine states induced by the
previously observed symbols. Specifically, 
\begin{align}
\Prob(X_L & = \meassymbol | \Future^L = \future^L) \nonumber \\
  & = \sum_{ \{\causalstate_k\} } \Prob(\meassymbol |\causalstate_k)
  \cdot \Prob(\CausalState_L = \causalstate_k | \Future^L = \future^L) ~.
\end{align}
Hence, the better an observer knows the machine state at the current time,
the better it can predict the next symbol generated. And, on average, the
closer $\AvgUncertainty(L)$ is to 0, the closer $\hmu(L)$ is to $\hmu$.
Therefore, the rate of convergence of $\hmu(L)$ to $\hmu$ for an \eM\ is
closely related to the average rate of synchronization.

%****************** RESULTS ********************

\section{Exact Synchronization Results}
\label{sec:Results}

This section provides our main results on synchronization rates for exact
machines and draws out consequences for the convergence rates of
$\AvgUncertainty(L)$ and $\hmu(L)$. 

The following notation will be used throughout:
\begin{itemize}
\setlength{\topsep}{0mm}
\setlength{\itemsep}{0mm}
\item $\SYN_L = \{ w \in \LLM: w$ is a sync word for $M \}$. 
\item $\NSYN_L = \{ w \in \LLM: w$ is \emph{not} a sync word for $M \}$. 
\item $\SYN_{L,\causalstate_k} = \{ w \in \LLM: w$ synchronizes the
	observer to state $\causalstate_k \}$. 
\item $\L(M,\causalstate_k) = \{ w: M$ can generate $w$ starting in
	state $\causalstate_k \}$. 
\item For words $w,w' \in \L(M)$, we say $w \subset w'$ if there exist
          words $u,v$ (of length $\geq 0$) such that $w' = uwv$. 
\item For a word $w \in \L(M,\causalstate_k)$,  $\delta(\causalstate_k,w)$
	is defined to be the (unique) state in $\CausalStateSet$ such that
	$\causalstate_k \goesonw  \delta(\causalstate_k,w)$. 
\item For a set of states $S \subset \CausalStateSet$, we define: 
\begin{align*}
\delta(S,w) = \{ \causalstate_j \in \CausalStateSet:
  \causalstate_k \goesonw \causalstate_j
  \mbox{ for some } \causalstate_k \in S 
  \} ~.
\end{align*}
\end{itemize}

% The Exact Machine Synchronization Theorem
\subsection{Exact Machine Synchronization Theorem}

Our first theorem states that an observer synchronizes (exactly) to the
internal state of any exact \eM\  exponentially fast.  
\begin{The}
For any exact \eM\ $M$, there are constants $K > 0$ and $0 < \alpha < 1$
such that:
\begin{equation}
\Prob(\NSYN_L) \leq K \alpha^L ~,
\label{Eq:ExactMachineSync}
\end{equation}
for all $L \in \N$.
\label{Thm:ExactMachineSync}
\end{The}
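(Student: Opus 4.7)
The plan is to leverage the finite sync word guaranteed by exactness together with strong connectivity of $M$ to show that in every block of $n$ successive symbols the observer has a uniformly positive chance of seeing a synchronizing subword. First I would establish that for every internal state $\causalstate_k \in \CausalStateSet$ there is a sync word $w_k \in \L(M,\causalstate_k)$: by exactness (the last remark before the theorem) there exists some finite sync word $w^*$; by strong connectivity there is, from $\causalstate_k$, an emittable path reaching a state from which $w^*$ is emittable; and by the prepending remark (Remark~2), concatenating that path with $w^*$ yields a sync word lying in $\L(M,\causalstate_k)$. This furnishes a family $\{w_k\}_{k=1}^N$; setting $n \equiv \max_k |w_k|$ and $p_k \equiv \Prob(\Future^{|w_k|} = w_k \mid \CausalState_0 = \causalstate_k)$, both $n < \infty$ and $p \equiv \min_k p_k > 0$.

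Next, combining Remarks~1 and~2 gives the stronger fact that any $w \in \LM$ containing a sync word as a \emph{subword} is itself a sync word: writing $w = usv$ with $s$ sync, Remark~2 makes $us$ sync and Remark~1 then makes $w = usv$ sync. Hence $\NSYN_L$ is exactly the event that $\Future^L$ contains no sync word as a subword. To force geometric decay, I would condition on the hidden state $\CausalState_{mn}$ and on the event $\NSYN_{mn}$. By the Markov property of the state chain together with unifilarity, the conditional law of $X_{mn}X_{mn+1}\cdots X_{(m+1)n-1}$ given $\{\NSYN_{mn},\CausalState_{mn}=\causalstate_j\}$ coincides with the unconditional law of a length-$n$ output block emitted from $\causalstate_j$, so this block begins with $w_j$ with probability at least $p_j \geq p$. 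In that case $\Future^{(m+1)n}$ contains $w_j$ as a subword and is itself a sync word, giving $\Prob(\NSYN_{(m+1)n} \mid \NSYN_{mn},\CausalState_{mn}=\causalstate_j) \leq 1 - p$. Summing over $j$ (using $\NSYN_{(m+1)n} \subseteq \NSYN_{mn}$) yields $\Prob(\NSYN_{(m+1)n}) \leq (1-p)\,\Prob(\NSYN_{mn})$, and induction gives $\Prob(\NSYN_{mn}) \leq (1-p)^m$.

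For arbitrary $L$ I would write $L = mn + r$ with $0 \leq r < n$ and use monotonicity $\NSYN_L \subseteq \NSYN_{mn}$ to conclude $\Prob(\NSYN_L) \leq (1-p)^{\lfloor L/n \rfloor} \leq K\alpha^L$ with $\alpha \equiv (1-p)^{1/n} \in (0,1)$ and $K \equiv 1/(1-p)$, proving the theorem. The main technical obstacle is justifying the conditional-independence step: because $\NSYN_{mn}$ is measurable with respect to the past observations $X_0,\ldots,X_{mn-1}$, one must verify that conditioning on $\NSYN_{mn}$ along with $\CausalState_{mn}=\causalstate_j$ does not alter the forward dynamics past time $mn$. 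This reduces to the ordinary Markov property of $(\CausalState_L)_{L\geq 0}$ applied at the deterministic time $mn$, but deserves careful bookkeeping. A secondary routine detail is checking that strong connectivity supplies a positive-probability emittable path from $\causalstate_k$ to the initial state of $w^*$, which is immediate since the transition graph is by definition the support of the edge distribution.
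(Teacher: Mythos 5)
Your proposal is correct and follows essentially the same route as the paper's proof: build per-state sync words $w_j$ by prepending connectivity paths to a single sync word, bound the conditional probability of failing to emit one in each successive length-$n$ block by $1-p$ via the Markov property at the hidden state, chain the blocks to get $(1-p)^m$, and finish with monotonicity of $\Prob(\NSYN_L)$. The only cosmetic difference is that you track the event $\NSYN_{mn}$ directly while the paper tracks non-containment of the fixed word $w$ and then uses $\NSYN_L \subseteq \{w \not\subset \Future^L\}$; you should also note, as the paper does, the trivial degenerate case $p=1$ where $K=1/(1-p)$ is undefined.
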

\begin{proof}
Let $M$ be an exact machine with sync word $w \in \L(M, \causalstate_k)$.
Since the graph of $M$ is strongly connected, we know that for each state
$\causalstate_j$ there is a word $v_j$ such that
$\delta(\causalstate_j,v_j) = \causalstate_k$.
Let $w_j = v_jw$, $n = \max_j |w_j|$, and $p = \min_j \Prob(w_j|\causalstate_j)$.
Then, for all $L \geq 0$, we have: 
\begin{align}
\Prob(w \subset \Future^{n + L} & | w \not\subset \Future^L) \nonumber \\
  & \geq \Prob(w \subset \Future^n_L | w \not\subset \Future^L)
  \nonumber \\
  & \geq \min_j \Prob(w \subset \Future^n_L | \CausalState_L = \causalstate_j)
  \nonumber \\
  & \geq p ~.
\end{align}
Hence,
\begin{align}
\Prob(w \not\subset \Future^{n + L} | w \not\subset \Future^L)
  \leq 1-p ~,
\end{align}
for all $L \geq 0$.
And, therefore, for all $m \in \N$:
\begin{align}
\Prob(\NSYN_{mn}) & \leq Pr(w \not\subset \Future^{mn}) \nonumber \\
  & = \Prob(w \not\subset \Future^{n})
  \cdot \Prob(W \not\subset \Future^{2n}|w \not\subset \Future^n)
  \nonumber \\
  & ~~~~~\cdots \Prob(w \not\subset \Future^{mn}|w \not\subset \Future^{(m-1)n})
  \nonumber \\
  & \leq (1-p) \cdot (1-p) \cdot \cdot \cdot (1-p) \nonumber \\
  & = (1-p)^m \nonumber \\
  & = \beta^m ~,
\end{align}
where $\beta \equiv 1-p$. 
Or equivalently, for any length $L = mn$ ($m \in \N$):
\begin{align}
\Prob(NSYN_L) \leq \alpha^L ~,
\end{align}
where $\alpha \equiv \beta^{1/n}$. Since $\Prob(\NSYN_L)$ is monotonically
decreasing, it follows that:
\begin{align}
\Prob(\NSYN_L) \leq \frac{1}{\alpha^n} \cdot \alpha^L = K \alpha^L ~,
\end{align}
for all $L \in \N$, where $K \equiv 1/\alpha^n$.  
\end{proof}
\begin{Rem}
In the above proof we implicitly assumed $\beta \not=0$.
If $\beta = 0$, then the conclusion follows trivially.
\end{Rem}

% Synchronization Rate
\subsection{Synchronization Rate}
\label{SubSec:SyncRate}

Theorem \ref{Thm:ExactMachineSync} states that an observer synchronizes
(exactly) to any exact \eM\ exponentially fast. However, the sync rate constant:
\begin{equation}
\alpha^* = \lim_{L \to \infty} \Prob(\NSYN_L)^{1/L} \\
\end{equation}
depends on the machine, and it may often be of practical interest to know the
value of this constant. We now provide a method for computing $\alpha^*$
analytically. It is based on the construction of an auxiliary machine $\Mt$. 

\begin{Def}
Let $M$ be an \eM\ with states
$\CausalStateSet = \{\causalstate_1, ... , \causalstate_N \}$, 
alphabet $\MeasAlphabet$, and transition matrices
$T^{(x)}, x \in \MeasAlphabet$. 
The \emph{possibility machine} $\Mt$ is defined as follows:
\begin{enumerate}
\item The alphabet of $\Mt$ is $\MeasAlphabet$.
\item The states of $\Mt$ are pairs of the form $(\causalstate,S)$ where
	$\causalstate \in \CausalStateSet$ and $S$ is a subset of $\CausalStateSet$
	that contains $\causalstate$. 
\item The transition probabilities are:
\begin{align*}
\Prob( (\causalstate,S) & \goesonx (\causalstate',S')) \\
  & = \Prob(x|\causalstate) I(x,(\causalstate,S),(\causalstate',S')) ~,
\end{align*}
where $I(x,(\causalstate,S),(\causalstate',S'))$ is the indicator function:
\begin{align*}
I(x, & (\causalstate,S),(\causalstate',S')) \\
  & = \left\{ \begin{array}{ll}
  	1 & \mbox{if } \delta(\causalstate,x) = \causalstate'
  		\mbox{ and } \delta(S,x) = S' \\
  	0 & \mbox{otherwise.}
  \end{array} \right.
\end{align*}
\end{enumerate}
A state of $\Mt$ is said to be \emph{initial} if it is of the form
$(\causalstate,\CausalStateSet)$ for some $\causalstate \in \CausalStateSet$.
For simplicity we restrict the $\Mt$ machine to consist of only those states
that are accessible from initial states. The other states are irrelevant for
the analysis below. 
\end{Def}

The idea is that $\Mt$'s states represent states of the joint (\eM, observer)
system. State $\causalstate$ is the true \eM\ state at the current
time, and $S$ is the set of states that the observer believes are currently
possible for the \eM\ to be in, after observing all previous symbols. 
Initially, all states are possible (to the observer), so initial states are
those in which the set of possible states is the complete set $\CausalStateSet$.

If the current true \eM\ state is $\causalstate$, and then the symbol $x$ is
generated, the new true \eM\ state must be $\delta(\causalstate,x)$.
Similarly, if the observer believes any of the states in $S$ are currently
possible, and then the symbol $x$ is generated, the new set of possible states
to the observer is $\delta(S,x)$. This accounts for the transitions in $\Mt$
topologically. The probability of generating a given symbol $x$ from
$(\causalstate,S)$ is, of course, governed only by the true state $\causalstate$
of the \eM\: $\Prob(x|(\causalstate,S)) = \Prob(x|\causalstate)$. 
 
An example of this construction for a $3$-state exact \eM\ is given in Appendix
\ref{AppendixA}. Note that the graph of the $\Mt$ machine there has a single
recurrent strongly connected component, which is isomorphic to the original
machine $M$. This is not an accident. It will always be the case, as long as
the original machine $M$ is exact. 
 
\begin{Rem}
If $M$ is an exact machine with more than 1 state the graph of $\Mt$ itself is never 
strongly connected. So, $\Mt$ is not an \eM\ or even an HMM in the sense of Def.
\ref{Def:HMM}. However, we still refer to $\Mt$ as a ``machine''. 
\end{Rem}
 
In what follows, we assume $M$ is exact. We denote the states of $\Mt$ as
$\widetilde{\CausalStateSet } = \{ q_1,\ldots,q_{\Nt} \}$, its symbol-labeled
transition matrices $\Tt^{(x)}$, and its overall state-to-state transition 
matrix $\Tt = \sum_{x \in \MeasAlphabet} \Tt^{(x)}$. We assume the states are
ordered in such a way that the initial states
$(\causalstate_1,\CausalStateSet), \ldots , (\causalstate_N,\CausalStateSet)$
are, respectively, $q_1, ... , q_N$. Similarly, the \emph{recurrent states} 
$(\causalstate_1,\{\causalstate_1\}), (\causalstate_2,\{\causalstate_2\}),
\ldots , (\causalstate_N,\{\causalstate_N\})$ are, respectively,
$q_{n + 1}, q_{n + 2}, ... ,q_{\Nt}$, where $n = \Nt - N$. The ordering of the
other states is irrelevant. In this case, the matrix $\Tt$ has the following
block upper-triangular form:
\begin{align}
\Tt =  \left( \begin{array}{cc}
B & B'\\
O & T
\end{array} \right) ~,
\end{align}

\noindent
where $B$ is a $n \times n$ matrix with nonnegative entries, $B'$ is a
$n \times N$ matrix with nonnegative entries, $O$ is a $N \times n$ matrix
of all zeros, and $T$ is the $N \times N$ state-to-state transition matrix of
the original \eM\ $M$. 

Let $\pit = (\pi_1, ... , \pi_N, 0, ... , 0)$ denote the length-$\Nt$ row
vector whose distribution over the initial states is the same as the
stationary distribution $\pi$ for the \eM\ $M$. Then, the initial probability
distribution $\phit_0$ over states of the joint (\eM, observer) system
is simply:
\begin{equation}
\phit_0 = \pit ~,
\end{equation}
\noindent
and, thus, the distribution over states of the joint system after the first $L$
symbols is:
\begin{equation}
\phit_L = \pit \Tt^L ~.
\end{equation}
If the joint system is in a recurrent state of the form
$(\causalstate_k, \{\causalstate_k\})$, then to the observer the only possible
state of the \eM\ is the true state, so the observer is synchronized. For all
other states of $\Mt$, the observer is not yet synchronized.  Hence, the
probability the observer is not synchronized after $L$ symbols is simply the
combined probability of all nonreccurent states $q_i$ in the distribution
$\phit_L$. Specifically, we have:
\begin{align}
\Prob(\NSYN_L)	& = \sum_{i = 1}^{n} (\phit_L)_i \nonumber \\
			& = \sum_{i = 1}^{n} (\pit \Tt^L)_i \nonumber \\
			& = \sum_{i = 1}^{n} (\pi^B B^L)_i \nonumber \\
			& = \norm{\pi^B B^L}_1 ~,
\end{align}
\noindent
where $\pi^B = (\pi_1, ... , \pi_N, 0, ... , 0)$ is the length-$n$ row vector
corresponding to the distribution over initial states $\pi$. The third
equality follows from the block upper-triangular form of $\Tt$.

Appendix \ref{AppendixB} shows that:
\begin{equation}
\lim_{L \to \infty}  \left(\norm{\pi^B B^L}_1\right)^{1/L} = r ~,
\label{Eq:VectorNormLimit}
\end{equation}
where $r = r(B)$ is the (left) spectral radius of $B$:
\begin{equation}
r(B) = \max \{|\lambda| : \lambda \mbox{ is a (left) eigenvalue of } B \}.
\end{equation} 
Thus, we have established the following result.
\begin{The}
For any exact \eM\ $M$, $\alpha^* = r$. 
\label{Thm:SyncRate}
\end{The}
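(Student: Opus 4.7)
The plan is to combine the matrix expression for $\Prob(\NSYN_L)$ derived in the paragraphs immediately preceding the theorem statement with an asymptotic spectral radius identity (the content of Appendix B). So the proof is essentially a two-line argument once the setup is in hand.

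First I would recall that, by the block upper-triangular structure of $\Tt$ established above, we have the exact identity
\begin{equation*}
\Prob(\NSYN_L) = \norm{\pi^B B^L}_1,
\end{equation*}
valid for every $L \in \N$. Taking $L$th roots and passing to the limit immediately yields
\begin{equation*}
\alpha^* = \lim_{L \to \infty} \Prob(\NSYN_L)^{1/L}
= \lim_{L \to \infty} \left(\norm{\pi^B B^L}_1\right)^{1/L}.
\end{equation*}
Then I would invoke \eqnref{Eq:VectorNormLimit} from Appendix \ref{AppendixB}, which identifies the right-hand limit with $r(B) = r$, giving $\alpha^* = r$ as claimed. (One should also verify that both limits actually exist; existence of the $\Prob(\NSYN_L)^{1/L}$ limit is inherited from the existence established for $\norm{\pi^B B^L}_1^{1/L}$ in Appendix \ref{AppendixB}, together with the exact identity.)

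The main obstacle, which the paper offloads to Appendix \ref{AppendixB}, is the lower bound in \eqnref{Eq:VectorNormLimit}. The upper bound is just Gelfand's formula applied to $B$ together with submultiplicativity: $\norm{\pi^B B^L}_1^{1/L} \leq \norm{\pi^B}_1^{1/L} \cdot \norm{B^L}_1^{1/L} \to r$. The delicate part is showing that the vector $\pi^B$ — which has nonzero mass only on the coordinates corresponding to the initial states $(\causalstate_k, \CausalStateSet)$ of $\Mt$ — actually has nontrivial projection onto a (generalized) left eigenspace of $B$ realizing the spectral radius. This is where exactness and strong connectedness of $M$ enter: every recurrent sync-state of $\Mt$ is reachable from every initial state, so the transient component of $\Mt$ driven by $\pi^B$ explores the full transient block of $B$, ensuring the decay rate is exactly $r$ rather than strictly faster.

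Aside from that spectral input, nothing else is required: the topological/probabilistic content is already packaged in the identity $\Prob(\NSYN_L) = \norm{\pi^B B^L}_1$, and the theorem reduces to a purely linear-algebraic asymptotic statement about iterates of the nonnegative matrix $B$ starting from the distinguished vector $\pi^B$.
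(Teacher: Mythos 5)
Your proposal follows the paper's proof exactly: the identity $\Prob(\NSYN_L) = \norm{\pi^B B^L}_1$ from the block upper-triangular form of $\Tt$, combined with the Appendix~\ref{AppendixB} limit $\lim_{L\to\infty}\left(\norm{\pi^B B^L}_1\right)^{1/L} = r(B)$, and you correctly identify the lower bound in that limit as the only delicate point. One small imprecision: the property the appendix actually uses is that every \emph{transient} state of $\Mt$ is reachable from \emph{some} initial state (so that $\norm{\pi^B B^L}_1 \geq C_1 \norm{\ve_j B^L}_1$ for every transient index $j$, one of which must carry weight $\gtrsim r(B)^L$ by an extremal-eigenvector argument), rather than reachability of the recurrent sync-states from every initial state as you phrased it.
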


%Consequences
\subsection{Consequences}

We now apply Thm. \ref{Thm:ExactMachineSync} to show that an observer's average
uncertainty $\AvgUncertainty(L)$ in the machine state and average uncertainty
$\hmu(L)$ in predictions of future symbols both decay exponentially fast to
their respective limits: $0$ and $\hmu$. The decay constant $\alpha$ in both
cases is essentially bounded by the sync rate constant $\alpha^*$ from Thm.
\ref{Thm:SyncRate}.

\begin{Prop}
\label{Pr:AvgUncertaintyL_Convergence}
For any exact  \eM\ $M$, there are constants $K > 0$ and $0 < \alpha < 1$
such that:
\begin{align}
\AvgUncertainty(L) \leq K \alpha^L, \mbox{ for all } L \in \N.
\end{align}
\end{Prop}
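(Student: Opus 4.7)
The plan is to use the definition of $\AvgUncertainty(L)$ as an expectation of conditional entropies and split the sum over sync versus non-sync words, then invoke Theorem~\ref{Thm:ExactMachineSync} directly.

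First I would write
\begin{align*}
\AvgUncertainty(L)
  &= \sum_{w \in \LLM} \Prob(w) \, H[\CausalState_L | \Future^L = w] \\
  &= \sum_{w \in \SYN_L} \Prob(w) \, u(w) + \sum_{w \in \NSYN_L} \Prob(w) \, u(w).
\end{align*}
The first sum vanishes identically, since $u(w) = 0$ by the definition of a sync word. For the second sum, note that $\phi(w)$ is a probability distribution over the finite state set $\CausalStateSet$ of cardinality $N$, so $u(w) = H[\phi(w)] \leq \log_2 N$ for every $w \in \LLM$. Consequently
\begin{align*}
\AvgUncertainty(L) \leq (\log_2 N) \cdot \Prob(\NSYN_L).
\end{align*}

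Now I would apply Theorem~\ref{Thm:ExactMachineSync}, which gives constants $K_0 > 0$ and $0 < \alpha < 1$ with $\Prob(\NSYN_L) \leq K_0 \alpha^L$ for all $L \in \N$. Setting $K \equiv K_0 \log_2 N$ yields $\AvgUncertainty(L) \leq K \alpha^L$, as required.

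There is really no substantive obstacle here: the argument is a two-line reduction, trading the conditional entropy bound $\log_2 N$ against the exponentially small probability of failing to synchronize. The only point worth flagging is the edge case $N = 1$, in which $\log_2 N = 0$ and the bound is trivial (and in any case a one-state machine is sync'd from the outset); otherwise $\log_2 N > 0$ and the constants above are well-defined. The exponent $\alpha$ inherited from Theorem~\ref{Thm:ExactMachineSync} is the same one controlling the synchronization rate, and in particular can be taken arbitrarily close to the spectral radius $\alpha^* = r$ of Theorem~\ref{Thm:SyncRate}.
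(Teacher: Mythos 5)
Your argument is correct and is essentially identical to the paper's own proof: the same decomposition of $\AvgUncertainty(L)$ over $\SYN_L$ and $\NSYN_L$, the same $\log N$ bound on the state uncertainty for non-sync words, and the same direct application of Theorem~\ref{Thm:ExactMachineSync}. Your remark on the $N=1$ edge case is a minor refinement the paper omits, but it does not change the substance.
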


\begin{proof}
Let $M$ be any exact machine. By Thm. \ref{Thm:ExactMachineSync} there
are constants $C > 0$ and $0 < \alpha < 1$ such that
$\Prob(\NSYN_L) \leq C \alpha^L$, for all $L \in \N$. Thus, we have:
\begin{align}
\AvgUncertainty(L) 
  & = \sumw \Prob(w) u(w) \nonumber \\
  & = \sum_{w \in \SYN_L} \Prob(w) u(w)
  + \sum_{w \in \NSYN_L} \Prob(w) u(w) \nonumber \\
  & \leq 0 + \sum_{w \in \NSYN_L} \Prob(w) \log(N) \nonumber \\ 
  & \leq \log(N) \cdot C  \alpha^L \nonumber \\
  & = K \alpha^L ~,
\end{align}
where $N$ is the number of machine states and $K \equiv C \log(N)$. \\
\end{proof}

Let $h_k \equiv H[\MeasSymbol_0|\CausalState_0 = \causalstate_k]$ and 
$h_w \equiv H[\MeasSymbol_0|\CausalState_0 \sim \phi(w)]$, be the 
conditional entropies in the next symbol given the state $\causalstate_k$ and 
word $w$. 

\begin{Prop}
\label{Pr:hmuL_Convergence}
For any exact \eM\ $M$:
\begin{align}
\hmu = H[\MeasSymbol_0|\CausalState_0] \equiv \sum_k \pi_k h_k
\label{eq:MachineEntropyRate}
\end{align}
and there are constants $K > 0$ and $0 < \alpha < 1$ such that:
\begin{align}
\hmu(L) - \hmu \leq K \alpha^L 	~, \mbox{ for all } L \in \N.
\end{align}
\end{Prop}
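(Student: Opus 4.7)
The proposition has two claims: the closed-form expression $\hmu = H[\MeasSymbol_0|\CausalState_0] = \sum_k \pi_k h_k$ and the exponential convergence of $\hmu(L)$ to $\hmu$. My plan is to prove both simultaneously by comparing $\hmu(L)$ to the candidate value $\hmu_0 \equiv \sum_k \pi_k h_k = H[\MeasSymbol_0|\CausalState_0]$, bounding the gap $\hmu(L) - \hmu_0$ in terms of the average state uncertainty $\AvgUncertainty(L-1)$, and then applying \Propref{Pr:AvgUncertaintyL_Convergence}. This way the state-synchronization rate from \Theref{Thm:ExactMachineSync} is transferred directly to the prediction gap.

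The central identity is
\[
\hmu(L) = H[\MeasSymbol_{L-1}|\Future^{L-1}, \CausalState_{L-1}] + I[\MeasSymbol_{L-1};\CausalState_{L-1}|\Future^{L-1}].
\]
For the first term, the edge-label HMM structure gives the Markov property $\MeasSymbol_{L-1} \perp \Future^{L-1} \mid \CausalState_{L-1}$: conditioned on $\CausalState_{L-1}$, the output $\MeasSymbol_{L-1}$ is drawn from the row $T^{(\cdot)}_{\CausalState_{L-1},\cdot}$ alone. Combined with stationarity, this yields $H[\MeasSymbol_{L-1}|\Future^{L-1}, \CausalState_{L-1}] = H[\MeasSymbol_0|\CausalState_0] = \hmu_0$. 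The second term is nonnegative and bounded by $H[\CausalState_{L-1}|\Future^{L-1}] = \AvgUncertainty(L-1)$.

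Putting these together gives $0 \leq \hmu(L) - \hmu_0 \leq \AvgUncertainty(L-1)$. By \Propref{Pr:AvgUncertaintyL_Convergence}, there exist constants $K' > 0$ and $0 < \alpha < 1$ with $\AvgUncertainty(L-1) \leq K' \alpha^{L-1} = (K'/\alpha)\,\alpha^L \equiv K \alpha^L$, so
\[
\hmu(L) - \hmu_0 \leq K \alpha^L, \quad L \in \N.
\]
Since $\hmu(L) \to \hmu$ as $L \to \infty$ and the exponential bound forces $\hmu(L) - \hmu_0 \to 0$, we conclude $\hmu = \hmu_0$, establishing the closed-form formula; the same inequality then becomes the claimed bound $\hmu(L) - \hmu \leq K\alpha^L$.

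There is no serious obstacle here; the hard work has already been done in \Theref{Thm:ExactMachineSync} and \Propref{Pr:AvgUncertaintyL_Convergence}. The only delicate step is verifying the Markov property $\MeasSymbol_{L-1} \perp \Future^{L-1} \mid \CausalState_{L-1}$, which follows immediately from the joint distribution formula for an edge-label HMM and, notably, does not rely on unifilarity—so the formula $\hmu = \sum_k \pi_k h_k$ holds more generally, while the exponential rate is what genuinely requires exactness.
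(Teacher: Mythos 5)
Your proof is correct, but it takes a genuinely different route from the paper's. The paper works directly with the word-level decomposition $\hmu(L+1) = \sum_{w \in \LLM} \Prob(w) h_w$, splits the sum over $\SYN_L$ and $\NSYN_L$, uses unifilarity to replace $h_w$ by $h_k$ on sync words, and bounds both residual pieces by $\log|\MeasAlphabet| \cdot \Prob(\NSYN_L)$ via \Theref{Thm:ExactMachineSync}; it never passes through $\AvgUncertainty(L)$. You instead use the information-theoretic identity $\hmu(L) = H[\MeasSymbol_{L-1}\mid\Future^{L-1},\CausalState_{L-1}] + I[\MeasSymbol_{L-1};\CausalState_{L-1}\mid\Future^{L-1}]$, identify the first term with $\sum_k \pi_k h_k$ via the HMM Markov property and stationarity, and bound the second by $\AvgUncertainty(L-1)$, so that \Propref{Pr:AvgUncertaintyL_Convergence} does all the remaining work. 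Your argument is shorter and makes explicit the clean inequality $0 \leq \hmu(L) - \hmu \leq \AvgUncertainty(L-1)$, which the paper's Sec.~II only asserts informally; the trade-off is that your constant involves $\log N$ (inherited from \Propref{Pr:AvgUncertaintyL_Convergence}) rather than the paper's $\log|\MeasAlphabet|$, and both routes ultimately rest on \Theref{Thm:ExactMachineSync}, so neither is more general in substance.

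One caveat on your closing remark: it is true that the conditional independence $\MeasSymbol_{L-1} \perp \Future^{L-1} \mid \CausalState_{L-1}$ holds for any edge-label HMM, but it does \emph{not} follow that $\hmu = \sum_k \pi_k h_k$ "holds more generally." Your own decomposition shows why: you still need $I[\MeasSymbol_{L-1};\CausalState_{L-1}\mid\Future^{L-1}] \to 0$, and that is exactly where synchronization --- hence unifilarity --- enters. For nonunifilar machines (e.g., the SNS Process of Fig.~\ref{fig:SNS}) this mutual information term generically has a strictly positive limit, and one only gets $\hmu \leq \sum_k \pi_k h_k$. This does not affect the validity of your proof for exact \eMs, but the side claim should be dropped or weakened to the inequality.
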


\begin{Rem}
The $\hmu$ formula Eq. (\ref{eq:MachineEntropyRate}) has been known for
some time, although in slightly different contexts. Shannon, for example,
derived this formula in his original publication \cite{Shan48a} for a type of
hidden Markov machine that is similar (apparently unifilar) to an \eM.
\end{Rem}

\begin{proof}
Let $M$ be any exact machine. Since we know $\hmu(L) \searrow \hmu$ it suffices
to show there are constants $K > 0$ and $0 < \alpha < 1$ such that:
\begin{align}
\left| \hmu(L) - \sum_k \pi_k h_k \right|
  \leq K \alpha^L ~,
\end{align}
for all $L \in \N$.
This will establish both the value of $\hmu$ and the necessary convergence.

Now, by Thm. \ref{Thm:ExactMachineSync}, there are constants $C > 0$ and
$0 < \alpha < 1$ such that $\Prob(\NSYN_L) \leq C \alpha^L$, for all
$L \in \N$. Also, note that for all $L$ and $k$ we have:
\begin{align}
\pi_k & = \sum_{w \in \LLM} \Prob(w) \cdot \phi(w)_k \nonumber \\
       	& \geq \sum_{w \in \SYN_{L,\causalstate_k}} \Prob(w) \cdot \phi(w)_k   \nonumber \\
  	& = \Prob(\SYN_{L,\causalstate_k}) ~.
\end{align}
Thus, 
\begin{align}
\sum_k \left( \pi_k - \Prob(\SYN_{L,\causalstate_k}) \right) \cdot h_k \geq 0
\label{eq:Bound1}
\end{align}
and
\begin{align}
\sum_k & ( \pi_k - \Prob(\SYN_{L,\causalstate_k}))
  \cdot h_k \nonumber \\
& \leq \sum_k (\pi_k - \Prob(\SYN_{L,\causalstate_k}))
  \cdot \log |\MeasAlphabet| \nonumber \\
& = \log |\MeasAlphabet| \cdot \left (\sum_k \pi_k - \sum_k
  \Prob(\SYN_{L,\causalstate_k}) \right) \nonumber \\
& = \log |\MeasAlphabet| \cdot (1 - \Prob(\SYN_L)) \nonumber \\
& = \log |\MeasAlphabet| \cdot \Prob(\NSYN_L) \nonumber \\
& \leq \log |\MeasAlphabet| \cdot C \alpha^L. 
\end{align}
Also, clearly, 
\begin{align}
\sum_{w \in \NSYN_L} \Prob(w) \cdot h_w \geq 0
\end{align} 
and
\begin{align} 
\sum_{w \in \NSYN_L} \Prob(w) \cdot h_w
  & \leq \log|\MeasAlphabet| \cdot \Prob(\NSYN_L) \nonumber \\
  & \leq \log|\MeasAlphabet| \cdot C \alpha^L ~. 
\label{eq:Bound4}
\end{align}

Therefore, we have for all $L \in \N$:
\begin{align}
& \left| \hmu(L+1) - \sum_k \pi_k h_k \right|
  \nonumber \\
& = \left| \sum_{w \in \LLM} \Prob(w) h_w - \sum_k \pi_k h_k \right|
  \nonumber \\
& = \left| \sum_{w \in \NSYN_L} \Prob(w) h_w + \sum_{w \in SYN_L} \Prob(w) h_w - \sum_k \pi_k h_k \right|
  \nonumber \\
& = \left| \sum_{w \in \NSYN_L} \Prob(w) h_w + \sum_k  \Prob(\SYN_{L,\causalstate_k}) h_k - \sum_k \pi_k h_k \right|
  \nonumber \\
& = \left| \sum_{w \in \NSYN_L} \Prob(w) h_w - \sum_k (\pi_k - \Prob(\SYN_{L,\causalstate_k})) h_k \right| 
\nonumber \\
& \leq C \log|\MeasAlphabet| \alpha^L ~.
\end{align}

The last inequality follows from Eqs. (\ref{eq:Bound1})-(\ref{eq:Bound4}),
since $|x-y| \leq z$ for all nonnegative real numbers $x$, $y$, and $z$ with
$x \leq z$ and $y \leq z$.

Finally, since:
\begin{align}
\left| \hmu(L+1) - \sum_k \pi_k h_k \right| \leq C \log|\MeasAlphabet| \alpha^L ~,
\end{align} 
for all $L \in \N$, we know that:
\begin{align}
\left| \hmu(L) - \sum_k \pi_k h_k \right| \leq K \alpha^L ~ ,
\end{align} 
for all $L \in \N$, where
$K \equiv (\log|\MeasAlphabet| / \alpha) \cdot \max \{C,1\}$.
\end{proof}

\begin{Rem}
For any $\alpha > \alpha^*$ there exists some $K > 0$ for which Eq.
(\ref{Eq:ExactMachineSync}) holds. Hence, by the constructive proofs above,
we see that the constant $\alpha$ in Props. \ref{Pr:AvgUncertaintyL_Convergence}
and \ref{Pr:hmuL_Convergence} can be chosen arbitrarily close to $\alpha^*$:
$\alpha = \alpha^* + \epsilon$.  
\end{Rem}

% **************** CHARACTERIZATION OF EXACT E-MACHINES ******************
\section{Characterization of Exact \EMs}
\label{sec:ExactnessTest}

In this section we provide a set of necessary and sufficient conditions for
exactness and an algorithmic test for exactness based upon these conditions.

% Exact Machine Characterization Theorem
\subsection{Exact Machine Characterization Theorem}

\begin{Def}
States $\causalstate_k$ and $\causalstate_j$ are said to be
\emph{topologically distinct} if
$\L(M,\causalstate_k) \not= \L(M,\causalstate_j)$. 
\end{Def}

\begin{Def}
States $\causalstate_k$ and $\causalstate_j$ are 
said to be \emph{path convergent} if there exists
$w \in \L(M,\causalstate_k) \cap \L(M,\causalstate_j)$
such that $\delta(\causalstate_k,w) = \delta(\causalstate_j,w)$. 
\end{Def}

If states $\causalstate_k$ and $\causalstate_j$ are topologically distinct (or
path convergent) we will also say the pair $(\causalstate_k, \causalstate_j)$
is topologically distinct (or path convergent). 

\begin{The}
An \eM\ $M$ is exact if and only if every pair of distinct states
$(\causalstate_k, \causalstate_j)$ satisfies at least one of the following two conditions:
\begin{enumerate}
\setlength{\topsep}{0mm}
\setlength{\itemsep}{0mm}
\item[(i)] {The pair ($\causalstate_k, \causalstate_j$) is topologically distinct.}
\item[(ii)] {The pair ($\causalstate_k, \causalstate_j$) is path convergent.}
\end{enumerate}
\label{thm:EMExact}
\end{The}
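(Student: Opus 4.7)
The plan is to prove both directions by exploiting the unifilarity of $M$ together with the strong connectedness of its graph.

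For the forward direction ($M$ exact $\Rightarrow$ (i) or (ii) for every pair), I would start by fixing any sync word $w^*$ guaranteed by Remark 2 after Def.~\ref{Def:ExactSync}. Unifilarity plus $u(w^*) = 0$ forces a common merge: there is a state $\causalstate_{l}$ such that $\delta(\causalstate_i, w^*) = \causalstate_{l}$ for every initial state $\causalstate_i$ with $w^* \in \L(M,\causalstate_i)$. Then, given any pair $(\causalstate_k, \causalstate_j)$ that is not topologically distinct, i.e., $\L(M,\causalstate_k) = \L(M,\causalstate_j)$, I would use strong connectedness to pick a word $v \in \L(M,\causalstate_k)$ leading from $\causalstate_k$ to some state from which $w^*$ can be emitted. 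Remark 1 implies $vw^*$ is still a sync word. Since $vw^* \in \L(M,\causalstate_k) = \L(M,\causalstate_j)$, and $vw^*$ drives every compatible initial state to the same merge state, we get $\delta(\causalstate_k, vw^*) = \delta(\causalstate_j, vw^*)$. Hence the pair is path convergent.

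For the reverse direction, I would work with the belief set dynamics. Starting from $S_0 = \CausalStateSet$, after observing a word $u$ the observer's set of possible states is $S_u = \{\delta(s,u) : s \in \CausalStateSet,\, u \in \L(M,s)\}$, and a sync word is exactly one for which $|S_u|=1$. My plan is to construct such a $u$ inductively, using the following one-step reduction lemma: for any belief set $S$ with $|S| \geq 2$, there exists a word $w$ emittable from some state in $S$ such that $|\delta(S',w)| \leq |S|-1$, where $S' = \{s \in S : w \in \L(M,s)\}$. To prove the lemma, pick any two distinct states $\causalstate_k, \causalstate_j \in S$. If they are topologically distinct, take $w$ in the symmetric difference of $\L(M,\causalstate_k)$ and $\L(M,\causalstate_j)$; then $w$ is emittable from one of them but not the other, so $S'$ loses at least one element relative to $S$ and unifilarity prevents any increase. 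If they are path convergent via some $w$, then $\causalstate_k$ and $\causalstate_j$ both lie in $S'$ but map to a common image under $\delta(\cdot,w)$, again reducing cardinality by at least one.

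Next, I would iterate: given the word $w_1 \cdots w_i$ built so far with belief $S_i$, apply the lemma to $S_i$ to get $w_{i+1}$, obtaining $S_{i+1}$ with $|S_{i+1}| \leq |S_i|-1$. Since the lemma only guarantees $w_{i+1}$ is emittable from some $s' \in S_i$, I must verify the whole concatenation $W = w_1\cdots w_m$ lies in $\L(M)$. This follows by pulling $s'$ back: writing $s' = \delta(s, w_1\cdots w_i)$ for some $s \in \CausalStateSet$ gives $w_1\cdots w_{i+1} \in \L(M,s)$. After at most $|\CausalStateSet|-1$ iterations we reach $|S_W|=1$, producing a finite sync word, which by Remark 2 after Def.~\ref{Def:ExactSync} implies exactness.

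The main obstacle is the reverse direction's bookkeeping: the one-step reduction only shrinks the belief at the level of sets, but I need the assembled word to be a legitimate element of $\L(M)$ with positive probability. The resolution is the inductive pull-back argument above, which uses unifilarity to lift emittability from $S_i$ back to $S_0$. A smaller issue to check is that in the topologically-distinct case, choosing $w$ of minimal length avoids accidentally enlarging the range of $\delta(\cdot, w)$ beyond $|S|-1$; this is automatic since $\delta(\cdot,w)$ is a function on the admissible subset of $S$, and removing $\causalstate_j$ from its domain suffices.
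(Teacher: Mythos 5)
Your proof is correct and follows essentially the same route as the paper's: the forward direction propagates a sync word into the language of every state to force each non-topologically-distinct pair to be path convergent, and the reverse direction iteratively shrinks the observer's belief set using the pairwise distinguishing/converging words until a single state remains. The only differences are cosmetic---you make explicit the pull-back argument showing the concatenated word lies in $\L(M)$, which the paper leaves implicit, and the fact that $vw^*$ remains a sync word is the content of the second remark following Def.~\ref{Def:ExactSync} (prepending), not the first (appending).
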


\begin{proof}
It was noted above that an \eM\ $M$ is exact if and only if it has some sync
word $w$ of finite length. Therefore, it is enough to show that every pair of
distinct states ($\causalstate_k, \causalstate_j$) satisfies either (i) or
(ii) if and only if $M$ has some sync word $w$ of finite length.

We establish the ``if'' first:
\emph{If $M$ has a sync word $w$, then every pair of distinct states
($\causalstate_k, \causalstate_j$) satisfies either (i) or (ii)}.

Let $w$ be a sync word for $M$. Then $w \in \L(M,\causalstate_k)$ for some $k$.
Take words $v_j$, $j = 1,2, ... N$, such that
$\delta(\causalstate_j,v_j) = \causalstate_k$.
Then, the word $v_j w \equiv w_j \in \L(M,\causalstate_j)$ is
also a sync word for $M$ for each $j$. Therefore, for each $i \not= j$ either
$w_j \not\in \L(M,\causalstate_i)$ or
$\delta(\causalstate_i,w_j) = \delta(\causalstate_j,w_j)$.
This establishes that the pair ($\causalstate_i, \causalstate_j$) is
either topologically distinct or path convergent. Since this holds
for all $j = 1,2, ... , N$ and for all $i \not= j$, we know every pair of
distinct states is either topologically distinct or path convergent.

Now, for the ``only if'' case:
\emph{If every pair of distinct states ($\causalstate_k, \causalstate_j$)
satisfies either (i) or (ii), then $M$ has a sync word $w$}.

If each pair of distinct states ($\causalstate_k, \causalstate_j$) satisfies
either (i) or (ii), then for all $k$ and $j$ ($k \not= j$) there is some word
$w_{\causalstate_k, \causalstate_j}$ such that one of the following 
three conditions is satisfied:
\begin{enumerate}
\setlength{\topsep}{0mm}
\setlength{\itemsep}{0mm}
\item $w_{\causalstate_k, \causalstate_j} \in \L(M,\causalstate_k)$, but $w_{\causalstate_k, \causalstate_j} \not\in \L(M,\causalstate_j)$.
\item $w_{\causalstate_k, \causalstate_j} \in \L(M,\causalstate_j)$,
	but $w_{\causalstate_k, \causalstate_j} \not\in \L(M,\causalstate_k)$.
\item $w_{\causalstate_k, \causalstate_j} \in \L(M,\causalstate_k) \cap
	\L(M,\causalstate_j)$
	and $\delta(\causalstate_k,w_{\causalstate_k, \causalstate_j})
	= \delta(\causalstate_j,w_{\causalstate_k, \causalstate_j})$.
\end{enumerate}

We construct a sync word $w = w_1w_2 ... w_m$ for $M$, where each
$w_i = w_{\causalstate_{k_i}, \causalstate_{j_i}}$ for some $k_i$
and $j_i$, as follows.
\begin{itemize}
\setlength{\topsep}{0mm}
\setlength{\itemsep}{0mm}
\item Let $\CausalStateSet^0
	= \{ \causalstate_1^0, \ldots, \causalstate_{N_0}^0 \}
	\equiv \CausalStateSet = \{ \causalstate_1, ... , \causalstate_N \}$.
	Take $w_1 = w_{\causalstate_1^0, \causalstate_2^0}$.
\item Let $\CausalStateSet^1 = \{ \causalstate_1^1, \ldots,
	\causalstate_{N_1}^1 \}
	\equiv \delta(\CausalStateSet^0,w_1)$.
	Since $w_1 = w_{\causalstate_1^0, \causalstate_2^0}$ satisfies
	either condition (1), (2), or (3), we know $N_1 < N_0$.
	Take $w_2 = w_{\causalstate_1^1, \causalstate_2^1}$.
\item Let $\CausalStateSet^2 = \{ \causalstate_1^2, \ldots,
	\causalstate_{N_2}^2 \}
	\equiv \delta(\CausalStateSet^1,w_2)$.
	Since $w_2 = w_{\causalstate_1^1, \causalstate_2^1}$ satisfies either
	condition (1), (2), or (3) we know $N_2 < N_1$.
	Take $w_3 = w_{\causalstate_1^2, \causalstate_2^2}$.
\end{itemize}

\begin{centering}
. \\
. \\
. \\ 
\end{centering} 
\vspace{5 mm}
Repeat until $|\CausalStateSet^m| = 1$ for some $m$.
Note that this must happen after a finite number of steps 
since $N = N_0$ is finite and $N_0 > N_1 > N_2 > \cdots$.

By this construction $w = w_1w_2 ... w_m \in \L(M)$ is a sync word for
$M$. After observing $w$, an observer knows the machine must be in state
$\causalstate_1^m$.
\end{proof}

% Test for exactness
\subsection{A Test for Exactness}

We can now provide an algorithmic test for exactness using the characterization
theorem of exact machines. We begin with subalgorithms to test for topological
distinctness and path convergence of state pairs. Both are essentially the same
algorithm and only a slight modification of the deterministic finite-automata
(DFA) table-filling algorithm to test for pairs of equivalent states
\cite{Hopc07a}.

\begin{Alg}
Test States for Topological Distinctness.
\begin{enumerate}
\setlength{\topsep}{0mm}
\setlength{\itemsep}{0mm}
\item \emph{Initialization}:
Create a table containing boxes for all pairs of distinct states
$(\causalstate_k, \causalstate_j)$. Initially, all boxes are blank.
Then,\\

Loop over distinct pairs ($\causalstate_k, \causalstate_j$) \\ 
\indent \hspace{5 mm} Loop over $\meassymbol \in \MeasAlphabet$ \\
\indent \hspace{10 mm} If $\{ \meassymbol \in \L(M,\causalstate_k)$ but
	$\meassymbol \not\in \L(M, \causalstate_j) \}$ \\
\indent \hspace{15 mm}
	or $\{ \meassymbol \in \L(M,\causalstate_j)$
	but $\meassymbol \not\in \L(M,\causalstate_k) \}$, \\
\indent \hspace{10 mm}
	then mark box for pair $(\causalstate_k, \causalstate_j)$. \\
\indent \hspace{5 mm} end \\
\indent end
\item \emph{Induction}:
If $\delta(\causalstate_k,\meassymbol) = \causalstate_{k'}$,
$\delta(\causalstate_j,\meassymbol) = \causalstate_{j'}$,
and the box for pair $(\causalstate_{k'},\causalstate_{j'})$ is already marked,
then mark the box for pair $(\causalstate_k, \causalstate_j)$. Repeat until
no more inductions are possible. \end{enumerate}
\label{Alg:TopologicalDistinctness}
\end{Alg}
 
\begin{Alg}
Test States for Path Convergence.\\

This algorithm is identical to Algorithm 1 except that the if-statement in
the initialization step is replaced with the following:
\begin{quote}
If $\meassymbol \in \L(M,\causalstate_k) \cap  \L(M,\causalstate_j)$ and
$\delta(\causalstate_k,\meassymbol) = \delta(\causalstate_j,\meassymbol)$,
then mark box for pair $(\causalstate_k, \causalstate_j)$.
\end{quote}
\label{Alg:PathConvergence}
\end{Alg}

With Algorithm \ref{Alg:TopologicalDistinctness} all pairs of topologically
distinct states end up with marked boxes. With Algorithm
\ref{Alg:PathConvergence} all pairs of path convergent states end up with
marked boxes. These facts can be proved, respectively, by using induction on
the length of the minimal distinguishing or path converging word $w$ for a
given pair of states. The proofs are virtually identical to the proof of the
standard DFA table filling algorithm, so the details have been omitted.  

Note also that both of these are polynomial-time algorithms.
Step (1) has run time $O(|\MeasAlphabet| N^2)$. The inductions in Step (2),
if done in a reasonably efficient fashion, can also be completed in run time
$O(|\MeasAlphabet| N^2)$. (See, e.g., the analysis of DFA table filling
algorithm in Ref. \cite{Hopc07a}.) Therefore, the total run time of these
algorithm is $O(|\MeasAlphabet| N^2)$. 

\begin{Alg}
Test for Exactness.
\begin{enumerate}
\setlength{\topsep}{0mm}
\setlength{\itemsep}{0mm}
\item {Use Algorithm 1 to find all pairs of topologically distinct states.}
\item {Use Algorithm 2 to find all pairs of path convergent states.}
\item {Loop over all pairs of distinct states $(\causalstate_k, \causalstate_j$)
	to check if they are either (i) topologically distinct or (ii) path convergent. 
	By Thm. \ref{thm:EMExact}, if all distinct pairs of states satisfy (i) or (ii) 
	or both, the machine is exact, and otherwise it is not.}
\end{enumerate}
\end{Alg}

This, too, is a polynomial-time algorithm.
Steps (1) and (2) have run time $O(|\MeasAlphabet| N^2)$. Step (3) has run
time $O(N^2)$. Hence, the total run time for this algorithm is
$O(|\MeasAlphabet| N^2)$. 

% *************** CONCLUSION *****************

\section{Conclusion}
\label{sec:Conclusion}

We have analyzed the process of exact synchronization to finite-state \eMs. In
particular, we showed that for exact machines an observer synchronizes
exponentially fast. As a result, the average uncertainty $\hmu(L)$ in an
observer's predictions converges exponentially fast to the machine's entropy
rate $\hmu$---a phenomenon first reported for subshifts estimated from maps of
the interval \cite{Crut83a}. Additionally, we found an efficient (polynomial-time) algorithm to test \eMs\ for exactness. 

In Ref. \cite{Trav10b} we similarly analyze asymptotic synchronization to
nonexact \eMs. It turns out that qualitatively similar results hold. That is,
$\AvgUncertainty(L)$ and $\hmu(L)$ both converge to their respective limits
exponentially fast. However, the proof methods in the nonexact case are
substantially different. 

In the future we plan to extend these results to more generalized model classes,
such as to \eMs\ with a countable number of states and to nonunifilar hidden
Markov machines. 

% ****************** ACKNOWLEDGEMENTS *******************

\section*{Acknowledgments}
NT was partially supported on a VIGRE fellowship.
The work was partially supported by the Defense Advanced Research Projects
Agency (DARPA) Physical Intelligence project via subcontract No. 9060-000709.
The views, opinions, and findings here are those of the authors and should not
be interpreted as representing the official views or policies, either expressed
or implied, of the DARPA or the Department of Defense.

% ****************** APPENDIX *******************

\appendix

\section{}
\label{AppendixA}

We construct the possibility machine $\Mt$ for the three-state \eM\ shown in
Fig. \ref{fig:ThreeStateEM}. The result is shown in Fig. \ref{fig:TP_Machine}. \\

\begin{figure}[ht]
\includegraphics[scale=0.65]{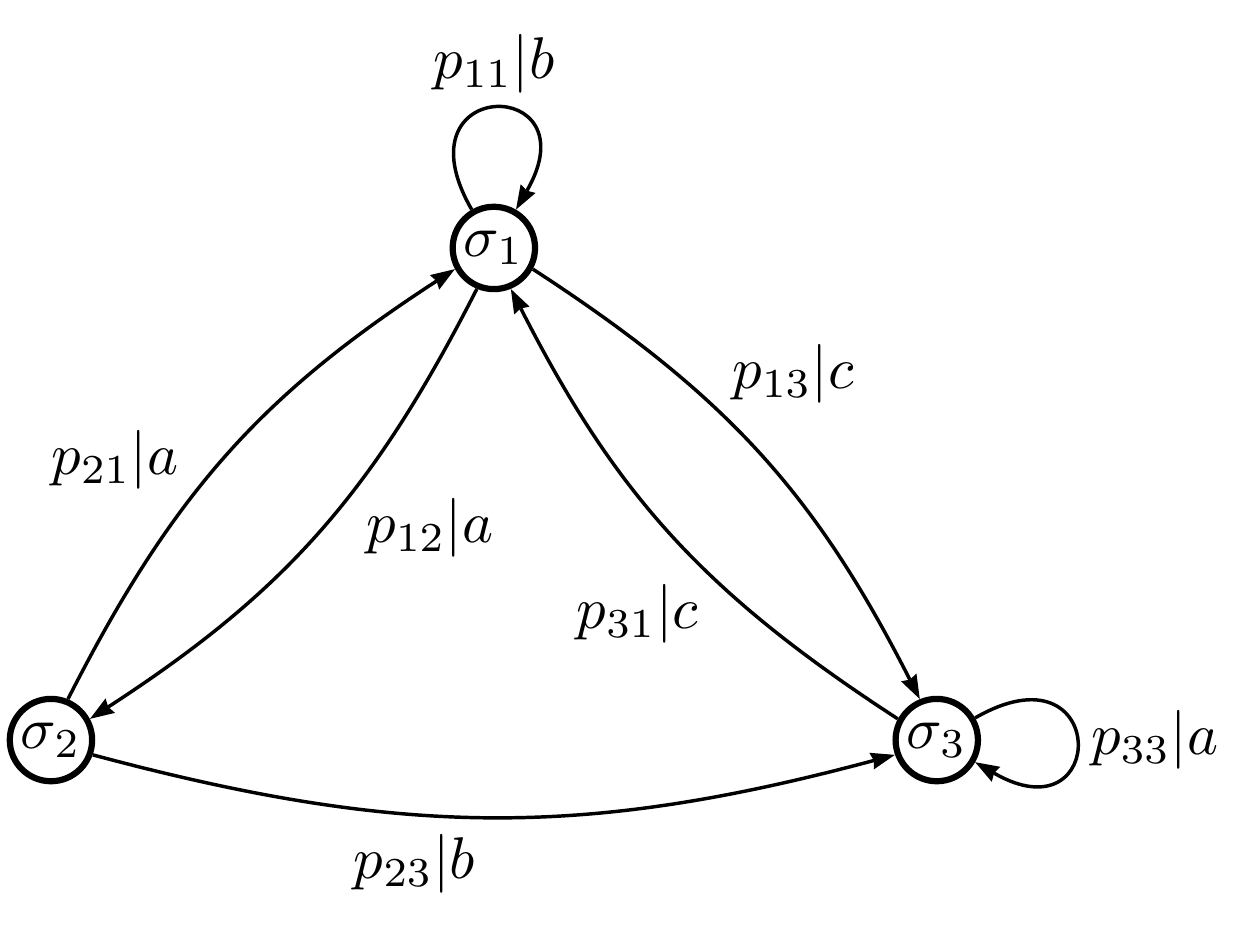} 
\caption{A three-state \eM\ $M$ with alphabet $\MeasAlphabet = \{a,b,c\}$. 
  }
\label{fig:ThreeStateEM}
\end{figure}

\begin{figure}[ht]
\includegraphics[scale=0.45]{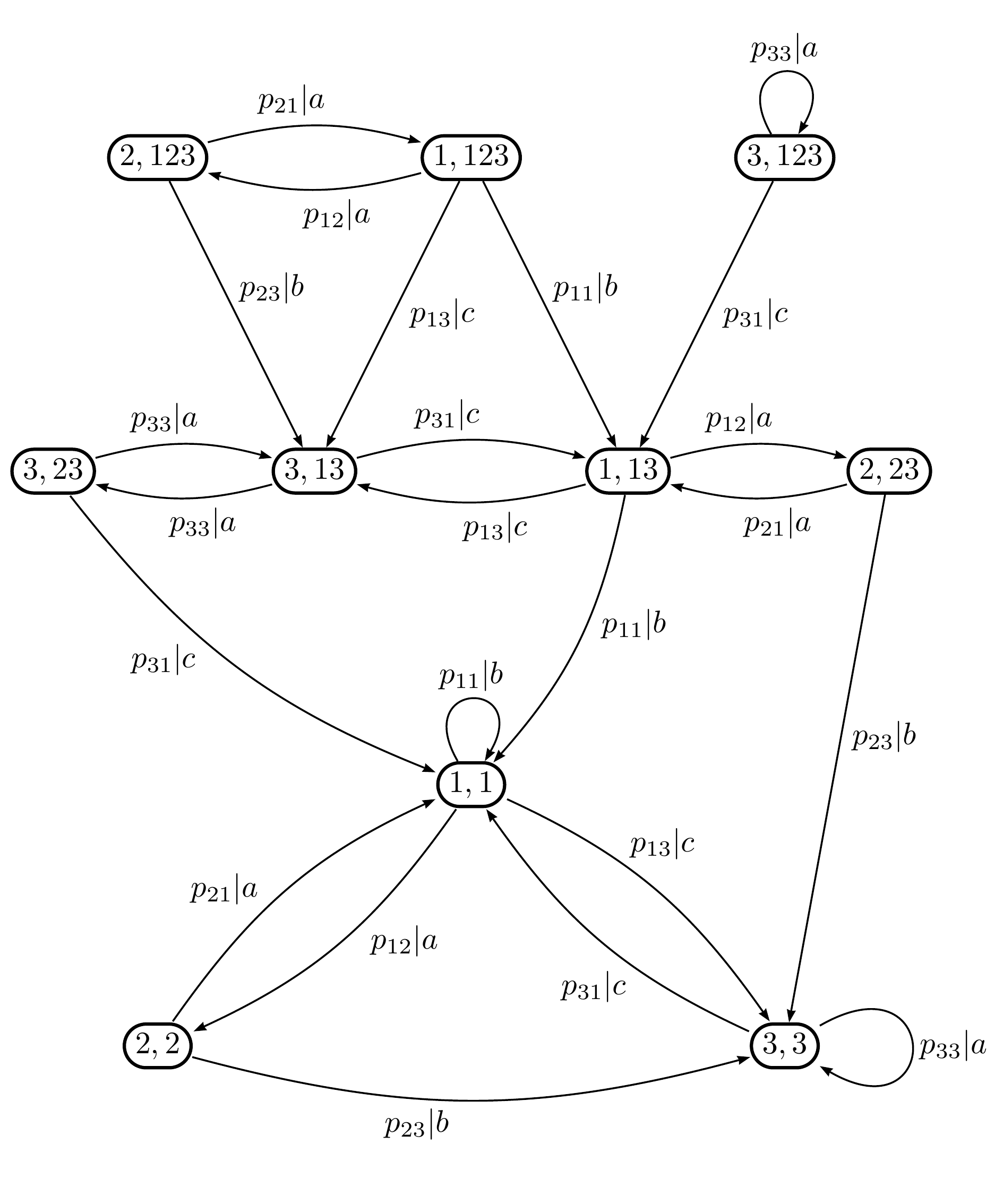} 
\caption{The possibility machine $\Mt$ for the three-state \eM\ $M$ of Fig.
  \ref{fig:ThreeStateEM}. The state names have been abbreviated for display
  purposes: e.g.,
  $(\causalstate_1, \{ \causalstate_1, \causalstate_2, \causalstate_3 \})
  \rightarrow (1,123)$.
  }
\label{fig:TP_Machine}
\end{figure}

\section{}
\label{AppendixB}

We prove Eq. (\ref{Eq:VectorNormLimit}) in Sec.  \ref{SubSec:SyncRate}.
(Restated here as Lemma \ref{Lem:VectorNormLimit}.)
\begin{Lem}
For any exact \eM\ $M$, 
\begin{equation}
\lim_{L \to \infty} \norm{\pi^B B^L}_1^{1/L} = r(B) ~. 
\end{equation}
\label{Lem:VectorNormLimit}
\end{Lem}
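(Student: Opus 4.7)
The plan is to prove matching upper and lower bounds for $\limsup$ and $\liminf$ of $\|\pi^B B^L\|_1^{1/L}$, writing $r = r(B)$.

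For the upper bound I would invoke Gelfand's formula: for any submultiplicative matrix norm $\|\cdot\|$, $\lim_L \|B^L\|^{1/L} = r$. Combining the operator inequality $\|\pi^B B^L\|_1 \leq \|\pi^B\|_1 \cdot \|B^L\|$ (using the row-vector operator norm induced by $\|\cdot\|_1$) with Gelfand and the equivalence of norms on finite-dimensional spaces gives $\limsup_L \|\pi^B B^L\|_1^{1/L} \leq r$.

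For the lower bound I would split into two cases. If $r = 0$, all eigenvalues of $B$ vanish, so $B$ is nilpotent by Cayley-Hamilton, $\|\pi^B B^L\|_1 = 0$ for large $L$, and the limit equals $0 = r$. If $r > 0$, I would appeal to the Perron-Frobenius theorem for nonnegative matrices (in its general, not-necessarily-irreducible form) to produce a nonnegative right eigenvector $u \geq 0$, $u \neq 0$, with $Bu = ru$. The crux is the claim $\pi^B \cdot u > 0$. Pick any index $i$ with $u_i > 0$; the corresponding transient state of $\Mt$ is accessible from some initial state $q_j$ with $j \in \{1,\ldots,N\}$, by the accessibility restriction in the construction of $\Mt$. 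Since no path in $\Mt$ returns from a recurrent (synchronized) state to a transient one, this path lies entirely in the transient block, i.e., in the graph of $B$, so $(B^k)_{ji} > 0$ for some $k$. Then $(B^k u)_j \geq (B^k)_{ji} u_i > 0$, and combined with $B^k u = r^k u$ and $r > 0$ this forces $u_j > 0$. Strong connectivity of $M$ gives $\pi_j > 0$, so $\pi^B u \geq \pi_j u_j > 0$. Since $\pi^B B^L$ has nonnegative entries,
\[
\|\pi^B B^L\|_1 \geq \frac{\pi^B B^L u}{\|u\|_\infty} = \frac{r^L (\pi^B u)}{\|u\|_\infty},
\]
which yields $\liminf_L \|\pi^B B^L\|_1^{1/L} \geq r$ and closes the argument.

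The main obstacle is the lower bound, specifically the claim $\pi^B u > 0$, i.e., that the nonnegative Perron eigenvector has nonzero mass on some initial state. Exactness of $M$ enters twice here: it ensures the recurrent component of $\Mt$ consists exactly of the synchronized states $(\causalstate_k, \{\causalstate_k\})$ with all other accessible states being transient, and it underwrites the no-return property used to confine the reachability path to the transient block, i.e., to the graph of $B$. Without exactness, $\pi^B$ could in principle be supported entirely off the dominant eigenspace and the lower bound would fail.
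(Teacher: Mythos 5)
Your proof is correct, and while the upper bound (Gelfand's formula plus submultiplicativity of the induced $1$-norm) is exactly the paper's Facts 3--5 argument, your lower bound takes a genuinely different route. The paper works with a \emph{left} maximal eigenvector $\vy$ of $B$ (not necessarily nonnegative), passes to $\vz = |\vy|$ via the triangle inequality, shows $\sum_k z_k \norm{\ve_k B^L}_1 \geq r(B)^L$, selects for each $L$ a coordinate $j(L)$ with $z_{j(L)} \neq 0$ carrying at least a $1/n$ share, and separately proves the uniform bound $\norm{\pi^B B^L}_1 \geq C_1 \norm{\ve_j B^L}_1$ for every $j$ using accessibility of each $B$-state from an initial state. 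You instead take a nonnegative \emph{right} eigenvector $u$ with $Bu = r u$ (invoking the reducible form of Perron--Frobenius, which the paper avoids) and reduce everything to the single positivity claim $\pi^B u > 0$, after which $\norm{\pi^B B^L}_1 \geq \pi^B B^L u / \norm{u}_\infty = r^L(\pi^B u)/\norm{u}_\infty$ finishes immediately. Both arguments ultimately rest on the same structural facts---every nonrecurrent state of $\Mt$ is reachable from an initial state, such a path cannot pass through a recurrent (singleton) state and hence stays in the $B$-block, and $\pi > 0$ by strong connectivity of $M$---but your pairing argument is cleaner: it dispenses with the constants $C_1, C_2$ and the slightly delicate choice of $j(L)$ with $z_{j(L)} \neq 0$, and it isolates the degenerate $r(B)=0$ case up front (nilpotency) rather than handling it inside the eigenvector manipulation as the paper does. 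The trade-off is that you need the existence of a \emph{nonnegative} eigenvector for the spectral radius of a general nonnegative matrix, whereas the paper gets by with an arbitrary maximal eigenvector and Fact 2.
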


In what follows $A$ denotes an arbitrary $m \times m$ matrix and $\vv$ and
$\vw$ denote row $m$-vectors. Unless otherwise specified, the entries of
matrices and vectors are assumed to be complex. 

\begin{Def}
The \emph{(left) matrix $p$-norms} $(1 \leq p \leq \infty)$ are defined as:
\begin{equation}
\norm{A}_p = \max\{ \norm{\vv A}_p: \norm{\vv}_p = 1\} ~.
\end{equation}
\end{Def}

The following facts will be used in our proof.
\begin{Fact}
\label{Fact1}
If $A$ is a matrix with real nonnegative entries and $\vv = (v_1, \ldots, v_m)$
is a vector with real nonnegative entries, then:
\begin{align}
\norm{\vv A}_1 & = \sum_{k = 1}^m \norm{(v_k \ve_k) A}_1 ~,
\end{align}
where $\ve_k = (0,\ldots,1,\ldots,0)$ is the $k_{th}$ standard basis vector. 
\end{Fact}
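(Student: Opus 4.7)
The plan is to verify this by direct computation, exploiting the fact that with nonnegative entries the $1$-norm reduces to a sum (with no cancellation from absolute values).

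First I would write $\vv$ in the standard basis as $\vv = \sum_{k=1}^{m} v_k \ve_k$, and use linearity of the matrix action to get
\[
\vv A \;=\; \sum_{k=1}^{m} (v_k \ve_k) A .
\]
Each term $(v_k \ve_k) A$ is simply $v_k$ times the $k$-th row of $A$, hence a vector with nonnegative entries (since $v_k \geq 0$ and $A$ has nonnegative entries). Consequently $\vv A$ itself has nonnegative entries.

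Next I would use the key observation that for any vector $\vw$ with nonnegative real entries, $\|\vw\|_1 = \sum_j w_j$ (no absolute values needed). Applying this to $\vv A$ and interchanging the order of the two finite sums gives
\[
\|\vv A\|_1 \;=\; \sum_{j=1}^{m} (\vv A)_j \;=\; \sum_{j=1}^{m} \sum_{k=1}^{m} v_k A_{kj} \;=\; \sum_{k=1}^{m} \sum_{j=1}^{m} v_k A_{kj} \;=\; \sum_{k=1}^{m} \|(v_k \ve_k) A\|_1 ,
\]
which is the claimed identity.

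There is essentially no obstacle here; the content of the fact is that $\|\cdot\|_1$ is additive (not merely subadditive) on a collection of vectors all lying in the nonnegative orthant, which is immediate because no entrywise cancellations can occur. The only thing to be a little careful about is noting explicitly that each summand $(v_k \ve_k) A$ lies in the nonnegative orthant, which is where the hypotheses on $A$ and $\vv$ are used.
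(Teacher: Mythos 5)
Your proof is correct and is precisely the ``direct computation'' the paper invokes for this fact: decompose $\vv$ in the standard basis, observe that nonnegativity of $\vv$ and $A$ makes the $1$-norm a plain entry sum with no cancellation, and exchange the two finite sums. Nothing further is needed.
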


\begin{Fact}
\label{Fact2}
Let $A$ be a matrix with real nonnegative entries, let $\vv = (v_1,\ldots,v_m)$
be a vector with complex entries, and let
$\vw = (w_1, \ldots , w_m) = (|v_1|, \ldots , |v_m|)$. Then:
\begin{equation}
\norm{\vv A}_1 \leq \norm{\vw A}_1 ~.
\end{equation}
\end{Fact}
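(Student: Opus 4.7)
The plan is to prove the inequality componentwise, then sum. For each $j$, I would first expand $(\vv A)_j = \sum_{i=1}^m v_i A_{ij}$ as a complex linear combination of the nonnegative real numbers $A_{ij}$, apply the triangle inequality for complex numbers to obtain
\begin{equation*}
|(\vv A)_j| = \Bigl| \sum_{i=1}^m v_i A_{ij} \Bigr| \leq \sum_{i=1}^m |v_i| \, |A_{ij}|,
\end{equation*}
and then use the hypothesis $A_{ij} \geq 0$ to drop the absolute values on $A_{ij}$, giving $\sum_i |v_i| A_{ij} = \sum_i w_i A_{ij} = (\vw A)_j$. Since $\vw$ and $A$ both have nonnegative real entries, $(\vw A)_j$ is a nonnegative real number, so the absolute value signs are not needed on the right-hand side.

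The final step is to sum over $j$ and use the definition of the $1$-norm. Because $(\vw A)_j \geq 0$, we have $|(\vw A)_j| = (\vw A)_j$, so
\begin{equation*}
\norm{\vv A}_1 = \sum_{j=1}^m |(\vv A)_j| \leq \sum_{j=1}^m (\vw A)_j = \sum_{j=1}^m |(\vw A)_j| = \norm{\vw A}_1,
\end{equation*}
which is the desired inequality.

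There is no real obstacle here; the entire argument rests on the triangle inequality together with the nonnegativity of $A$'s entries, which is precisely what allows $|v_i A_{ij}| = |v_i|\, A_{ij}$ without picking up a sign. The nonnegativity of $\vw A$ is what lets one replace $|(\vw A)_j|$ by $(\vw A)_j$ at the end, linking the bound cleanly back to the $1$-norm of $\vw A$.
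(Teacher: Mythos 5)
Your proof is correct and is exactly the argument the paper intends: the paper simply remarks that Fact~\ref{Fact2} ``follows from the triangle inequality,'' and your componentwise expansion, application of the triangle inequality, and use of the nonnegativity of $A$'s entries is the standard fleshing-out of that remark. No differences in approach worth noting.
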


\begin{Fact}
\label{Fact3}
For any matrix $A = \{a_{ij}\}$, the matrix $1$-norm is the largest absolute
row sum:
\begin{equation}
\norm{A}_1 = \max_i \sum_{j=1}^m |a_{ij}| ~.
\end{equation}
\end{Fact}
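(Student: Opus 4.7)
The plan is to prove Fact \ref{Fact3} by establishing the upper bound via the triangle inequality and matching it with a specific choice of $\vv$. By definition $\norm{A}_1 = \sup\{\norm{\vv A}_1 : \norm{\vv}_1 = 1\}$, where $\vv$ is a row vector. The content of the fact is that this supremum equals $R \equiv \max_i \sum_{j=1}^m |a_{ij}|$.

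First I would show $\norm{A}_1 \leq R$. For any row vector $\vv = (v_1, \ldots, v_m)$ with $\norm{\vv}_1 = 1$, the $j$th coordinate of $\vv A$ is $\sum_i v_i a_{ij}$, so by the triangle inequality
\begin{align*}
\norm{\vv A}_1
&= \sum_{j=1}^m \Bigl| \sum_{i=1}^m v_i a_{ij} \Bigr|
\leq \sum_{j=1}^m \sum_{i=1}^m |v_i|\, |a_{ij}| \\
&= \sum_{i=1}^m |v_i| \sum_{j=1}^m |a_{ij}|
\leq R \sum_{i=1}^m |v_i|
= R ~.
\end{align*}
Taking the supremum over unit vectors gives $\norm{A}_1 \leq R$.

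For the reverse inequality, let $i^*$ be an index achieving $R = \sum_{j=1}^m |a_{i^*j}|$, and take $\vv = \ve_{i^*}$, the $i^*$th standard basis row vector, which has $\norm{\ve_{i^*}}_1 = 1$. Then $\ve_{i^*} A$ is simply the $i^*$th row of $A$, so
\begin{equation*}
\norm{\ve_{i^*} A}_1 = \sum_{j=1}^m |a_{i^*j}| = R ~.
\end{equation*}
Hence $\norm{A}_1 \geq R$, and combining with the upper bound yields $\norm{A}_1 = R$, as claimed.

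There is no real obstacle here; the argument is a short bookkeeping exercise whose only subtlety is that the paper uses the \emph{left} matrix norm (row-vector action from the left), so the optimum is attained on a row rather than a column and the result reads as a maximum absolute \emph{row} sum rather than the more familiar column-sum formula for the standard right action.
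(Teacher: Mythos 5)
Your proof is correct and is precisely the canonical argument: the triangle-inequality upper bound together with the choice $\vv = \ve_{i^*}$ at a maximizing row. The paper itself offers no proof of Fact~\ref{Fact3}, simply citing it as ``a standard result from linear algebra,'' and your write-up supplies exactly that standard result, correctly adapted to the paper's left-action (row-vector) convention, which is why the answer is a row sum rather than the more familiar column sum.
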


\begin{Fact}
\label{Fact4}
For any matrix $A$, $L \in \N$, and $1 \leq p \leq \infty$:
\begin{equation}
\norm{A^L}_p \leq \norm{A}_p^L ~.
\end{equation}
\end{Fact}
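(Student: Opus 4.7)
The plan is to derive Fact 4 from the more general submultiplicativity bound $\norm{AB}_p \leq \norm{A}_p \norm{B}_p$ for the induced matrix $p$-norm, followed by a one-line induction on $L$. The whole argument is the standard textbook derivation of submultiplicativity for operator norms adapted to the (left) row-vector convention used in the paper.

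First I would promote the definition of $\norm{A}_p$ to a homogeneous inequality valid for \emph{all} row vectors: for every $\vv$,
\begin{equation*}
\norm{\vv A}_p \leq \norm{A}_p \cdot \norm{\vv}_p.
\end{equation*}
If $\vv = 0$ this is trivial, and otherwise I would apply the definition of $\norm{A}_p$ to the unit-norm vector $\vv/\norm{\vv}_p$ and rescale. This step is the only place where a mild subtlety (the zero vector) needs to be handled, and it is routine.

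Next I would prove submultiplicativity. Fix any row vector $\vv$ with $\norm{\vv}_p = 1$ and apply the homogeneous bound twice in succession:
\begin{equation*}
\norm{\vv (AB)}_p = \norm{(\vv A)\, B}_p \leq \norm{B}_p \cdot \norm{\vv A}_p \leq \norm{A}_p \norm{B}_p.
\end{equation*}
Taking the maximum over all such $\vv$ yields $\norm{AB}_p \leq \norm{A}_p \norm{B}_p$.

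Finally I would prove Fact 4 by induction on $L$. The case $L = 1$ is immediate. For the inductive step, write $A^{L+1} = A^L \cdot A$ and combine submultiplicativity with the inductive hypothesis $\norm{A^L}_p \leq \norm{A}_p^L$ to obtain $\norm{A^{L+1}}_p \leq \norm{A}_p^{L+1}$. There is no real obstacle here; the result is a direct consequence of the definition of an induced norm, and no property specific to \eMs\ or nonnegative matrices is needed.
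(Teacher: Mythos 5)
Your proof is correct. The paper does not actually prove Fact~\ref{Fact4}; it simply cites it (together with Fact~\ref{Fact5}) as a finite-dimensional special case of general results for bounded linear operators on Banach spaces in Ref.~\cite{Reed80}. Your argument supplies the standard elementary derivation that the paper omits: first the homogeneous bound $\norm{\vv A}_p \leq \norm{A}_p \norm{\vv}_p$ (correctly handling $\vv = 0$), then submultiplicativity $\norm{AB}_p \leq \norm{A}_p \norm{B}_p$ via the factorization $\vv(AB) = (\vv A)B$ under the paper's left row-vector convention, and finally induction on $L$. Every step checks out, and nothing specific to \eMs\ or nonnegative matrices is needed, as you note. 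What the paper's citation buys is brevity and the more general operator-theoretic context needed for Fact~\ref{Fact5} (the spectral radius formula), which genuinely requires more work; what your approach buys is a self-contained two-line proof of Fact~\ref{Fact4} that a reader can verify without consulting the reference.
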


\begin{Fact}
\label{Fact5}
For any matrix $A$ and $1 \leq p \leq \infty$:
\begin{equation}
\lim_{L \rightarrow \infty} \norm{A^L}_p^{1/L} =  r(A) ~,
\end{equation}
where $r(A)$ is the (left) spectral radius of $A$:
\begin{equation}
r(A) = \max \{|\lambda| : \lambda \mbox{ is a (left) eigenvalue of $A$} \}.
\end{equation}
(This is, of course, the same as the right spectral radius, but we emphasize
the left eigenvalues for the proof of Lemma \ref{Lem:VectorNormLimit} below.) 
\end{Fact}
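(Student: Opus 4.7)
The statement to prove is Gelfand's spectral radius formula for the left matrix $p$-norm, i.e., $\lim_{L \to \infty} \|A^L\|_p^{1/L} = r(A)$. My plan is to establish the limit by sandwiching: showing $\liminf_L \|A^L\|_p^{1/L} \geq r(A)$ from an eigenvector argument, and $\limsup_L \|A^L\|_p^{1/L} \leq r(A)$ via the Jordan normal form.

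For the lower bound, I would pick a left eigenvalue $\lambda$ of $A$ with $|\lambda| = r(A)$ and a corresponding nonzero left eigenvector $\vv$, so $\vv A = \lambda \vv$ and inductively $\vv A^L = \lambda^L \vv$. By the definition of the left operator $p$-norm, $\|\vv A^L\|_p \leq \|\vv\|_p \cdot \|A^L\|_p$, while directly $\|\vv A^L\|_p = r(A)^L \|\vv\|_p$. Dividing by $\|\vv\|_p > 0$ gives $\|A^L\|_p \geq r(A)^L$, so $\|A^L\|_p^{1/L} \geq r(A)$ for every $L$. The case $r(A) = 0$ is handled trivially since the inequality is vacuous.

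For the upper bound, I would use the Jordan decomposition $A = P J P^{-1}$, which gives $A^L = P J^L P^{-1}$. Each Jordan block $J_\lambda$ of size $k$ satisfies $J_\lambda = \lambda I + N$ with $N$ nilpotent of index $k$, so by the binomial theorem
\begin{equation*}
J_\lambda^L = \sum_{j=0}^{k-1} \binom{L}{j} \lambda^{L-j} N^j,
\end{equation*}
whose entries are bounded in absolute value by a polynomial in $L$ of degree at most $d-1$ (where $d$ is the size of the largest Jordan block) times $r(A)^L$. All matrix $p$-norms on a finite-dimensional space being equivalent, I obtain a constant $C > 0$ such that $\|A^L\|_p \leq C \, L^{d-1} \, r(A)^L$ for all $L$, so $\|A^L\|_p^{1/L} \leq C^{1/L} L^{(d-1)/L} r(A) \to r(A)$.

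The main subtlety is the change of basis in the Jordan step: the $P^{-1}$ and $P$ factors must be absorbed into the constant $C$ using the equivalence of norms, and a small edge case arises if $r(A) = 0$, where $A$ is nilpotent and the upper bound still goes through because $A^L = 0$ for all sufficiently large $L$. Combining both bounds and taking $L \to \infty$ yields the stated formula; the identity between the left and right spectral radii is standard (eigenvalues of $A$ and $A^T$ coincide), so the final statement of Fact~\ref{Fact5} follows.
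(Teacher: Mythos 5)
Your proof is correct, but it takes a different route from the paper, which in fact offers no proof of Fact~\ref{Fact5} at all: the authors simply remark that Facts~\ref{Fact4} and \ref{Fact5} are finite-dimensional versions of more general results for bounded linear operators on Banach spaces and cite Reed and Simon. Your argument is a self-contained finite-dimensional proof of Gelfand's formula: the lower bound from a maximal left eigenvector together with the submultiplicativity $\norm{\vv A^L}_p \leq \norm{\vv}_p \norm{A^L}_p$ built into the definition of the operator norm (note this gives the pointwise inequality $\norm{A^L}_p \geq r(A)^L$ for every $L$, not just in the limit), and the upper bound from the Jordan normal form, with the polynomial factor $L^{d-1}$, the constants absorbing $P$ and $P^{-1}$ by equivalence of norms, and the $r(A)^{-j}$ factors for $j \leq d-1$ all washed out by the $L$-th root. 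Both edge cases at $r(A) = 0$ are handled. What the paper's citation buys is brevity and generality, since the Banach-space statement would also cover the countable-state extensions the authors flag as future work; what your proof buys is elementary self-containment, reducing everything to linear algebra at the cost of invoking the Jordan decomposition, which is specific to finite dimensions. Since the paper explicitly works with complex entries, your use of a possibly complex maximal eigenvector in the lower bound is consistent with the setup.
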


Fact \ref{Fact1} can be proved by direct computation, and Fact \ref{Fact2}
follows from the triangle inequality. Fact \ref{Fact3} is a standard result
from linear algebra. Facts \ref{Fact4} and \ref{Fact5} are finite-dimensional
versions of more general results established in Ref. \cite{Reed80} for bounded
linear operators on Banach spaces.

Using these facts we now prove Lemma \ref{Lem:VectorNormLimit}.

\begin{proof}
By Fact \ref{Fact5} we know:
\begin{equation}
\limsup_{L \to \infty} \norm{\pi^B B^L}_1^{1/L} \leq r(B) ~.
\end{equation}
Thus, it suffices to show that:
\begin{equation}
\liminf_{L \to \infty} \norm{\pi^B B^L}_1^{1/L} \geq r(B) ~.
\end{equation}

Let us define the \emph{B-machine} to be the restriction of the $\Mt$ machine
to its nonreccurent states. The state-to-state transition matrix for this
machine is $B$. We call the states of this machine \emph{B-states} and refer to
paths in the associated graph as \emph{B-paths}. Note that the rows of
$B = \{b_{ij} \}$ are substochastic:
\begin{equation}
\sum_j b_{ij} \leq 1 ~,
\end{equation}
for all $i$, with strict inequality for at least one value of $i$ as long as $M$ has more than 1 state.  

By the construction of the B-machine we know that for each of its states $q_j$
there exists some initial state $q_i = q_{i(j)}$ such that $q_j$ is accessible
from $q_{i(j)}$. Define $l_j$ to be the length of the shortest B-path from
$q_{i(j)}$ to $q_j$, and $l_{max} = \max_j l_j$. Let $c_j > 0$ be the
probability, according to the initial distribution $\pi^B$, of both starting
in state $q_{i(j)}$ at time $0$ and ending in state $q_j$ at time $l_j$: 
\begin{align*}
c_j = ( \pi_{i(j)} \ve_{i(j)} B^{l_j})_j ~.
\end{align*}
Finally, let $C_1 = \min_{j} c_j$.

Then, for any $L > l_{max}$ and any state $q_j$ we have:
\begin{align}
\norm{\pi^B B^L}_1	& \geq \norm{\pi_{i(j)} \ve_{i(j)} B^L}_1 \\
  & = \norm{(\pi_{i(j)} \ve_{i(j)} B^{l_j}) B^{L - l_j} }_1 \\
  & \geq \norm{c_j \ve_j B^{L - l_j}}_1 \\
  & \geq C_1 \norm{\ve_j B^{L - l_j}}_1 \\
  & \geq C_1 \norm{\ve_j B^L}_1 ~.
\end{align}
Equation (B12) follows from Fact \ref{Fact1}. The decomposition in Eq. (B13) is
possible since $L > l_{max} \geq l_j$.
Equation (B14) follows from Fact \ref{Fact1} and the definition of $c_j$.
Equation (B15) follows from the definition of $C_1$. Finally, Eq.
(B16) follows from Fact \ref{Fact3}, Fact \ref{Fact4}, and Eq.
(B11). 

Now, take a normalized (left) eigenvector $\vy = (y_1, ... , y_n)$ of $B$
whose associated eigenvalue is maximal. That is, $\norm{\vy}_1 = 1$,
$\vy B = \lambda \vy$, and $|\lambda| = r(B)$.
Define $\vz = (z_1, ... , z_n) = (|y_1|, ... , |y_n|)$.
Then, for any $L \in \N$:
\begin{align}
\sum_{k=1}^n z_k \norm{\ve_k B^L}_1	& = \norm{\vz B^L}_1 \\
  & \geq \norm{\vy B^L}_1\\
  & = \norm{\lambda^L \vy}_1 \\
  & = |\lambda|^L \cdot \norm{\vy}_1 \\ 
  & = r(B)^L ~,
\end{align}
where Eq. (B17) follows from Fact \ref{Fact1} and Eq. (B18) from Fact
\ref{Fact2}. Therefore, for each $L$ we know there exists some $j = j(L)$ in
$\{1, ... , n\}$ such that:
\begin{equation}
z_{j(L)} \norm{\ve_{j(L)} B^L}_1 \geq \frac{r(B)^L}{n} ~.
\end{equation}
Now, $r(B)$ may be $0$, but we can still choose the $j(L)'s$ such that
$z_{j(L)}$ is never zero. And, in this case, we may divide through by
$z_{j(L)}$ on both sides of Eq. (B22) to obtain, for each $L$:
\begin{align}
\norm{\ve_{j(L)} B^L}_1 & \geq \frac{r(B)^L}{n \cdot z_{j(L)}} \nonumber \\
  & \geq C_2 \cdot r(B)^L,
\end{align}
where $C_2> 0$ is defined by:
\begin{align*}
C_2 = \min_{z_j \not= 0} \frac{1}{n \cdot z_j} ~.
\end{align*}
Therefore, for any $L > l_{max}$ we know:
\begin{align}
\norm{\pi^B B^L}_1 	& \geq C_1 \cdot \norm{\ve_{j(L)} B^L}_1 \\
  & \geq C_1 \cdot  \left( C_2 \cdot r(B)^L \right) \\
  & = C_3 \cdot r(B)^L ~,
\end{align}
where $C_3 \equiv C_1 C_2$. Equation (B24) follows from Eq. (B16) and
Eq. (B25) follows from Eq. (B23). 
Finally, since this holds for all $L > l_{max}$, we have:
\begin{align}
\liminf_{L \to \infty} \norm{\pi^B B^L}_1^{1/L}
  & \geq \liminf_{L \to \infty} \left( C_3 \cdot r(B)^L \right)^{1/L}
  \nonumber \\
  & = r(B) ~.
\end{align}
\end{proof}

% ****************** REFERENCES *************************

\bibliography{ref,chaos}

\end{document}